\newcommand\paper{article}
\newcommand\oursubsection[1]{\subsection{#1}}
\xpatchcmd{\algorithmic}{\labelsep 0.5em}{\labelsep 3em}{\typeout{Success!}}{\typeout{Oh dear!}}
\title{SAT-Inspired Higher-Order Eliminations}
\author[J.~Blanchette]{Jasmin Blanchette\rsuper{a,b}\lmcsorcid{0000-0002-8367-0936}}	
\author[P.~Vukmirovi\'c]{Petar Vukmirovi\'c\rsuper{c}\lmcsorcid{0000-0001-7049-6847}}	
\address{\lsuper{a}Ludwig-Maximilians-Universität München, Institut für Informatik, Munich, Germany}
\email{jasmin.blanchette@ifi.lmu.de}
\address{\lsuper{b}Max-Planck-Institut f\"ur Informatik, Saarland Informatics Campus, Saarbr\"ucken, Germany}
\email{jasmin.blanchette@mpi-inf.mpg.de}
\address{\lsuper{c}Vrije Universiteit Amsterdam,
Department of Computer Science, Amsterdam,
The Netherlands}	
\email{petar.vukmirovic2@gmail.com}
\theoremstyle{definition}
\newtheorem{conventionx}[thm]{Convention}
\let\oldlabelitemi=\labelitemi
\let\labelitemi=\labelitemii
\let\labelitemii=\oldlabelitemi
\def\negvthinspace{\kern-0.083333em}
\def\vthinspace{\kern+0.083333em}
\def\vvthinspace{\kern+0.0416667em}
\def\negvvthinspace{\kern-0.0416667em}
\begin{document}

\newdimen\carpetH
\newdimen\carpetV
\def\carpet#1{\setbox0=\hbox{\ensuremath{#1}}
  \kern+2\carpetH
    \raise+2\carpetV\copy0\kern-\wd0
    \raise+\carpetV\copy0\kern-\wd0
    \copy0\kern-\wd0
    \raise-\carpetV\copy0\kern-\wd0
    \raise-2\carpetV\copy0\kern-\wd0
  \kern-\carpetH
    \raise+2\carpetV\copy0\kern-\wd0
    \raise+\carpetV\copy0\kern-\wd0
    \copy0\kern-\wd0
    \raise-\carpetV\copy0\kern-\wd0
    \raise-2\carpetV\copy0\kern-\wd0
  \kern-\carpetH
    \raise+2\carpetV\copy0\kern-\wd0
    \raise+\carpetV\copy0\kern-\wd0
    \copy0\kern-\wd0
    \raise-\carpetV\copy0\kern-\wd0
    \raise-2\carpetV\copy0\kern-\wd0
  \kern-\carpetH
    \raise+2\carpetV\copy0\kern-\wd0
    \raise+\carpetV\copy0\kern-\wd0
    \copy0\kern-\wd0
    \raise-\carpetV\copy0\kern-\wd0
    \raise-2\carpetV\copy0\kern-\wd0
  \kern-\carpetH
    \raise+2\carpetV\copy0\kern-\wd0
    \raise+\carpetV\copy0\kern-\wd0
    \copy0\kern-\wd0
    \raise-\carpetV\copy0\kern-\wd0
    \raise-2\carpetV\copy0
  \kern2\carpetH
}

\newcommand\heavy[1]{\carpetH=.02ex\carpetV=.02ex\carpet{#1}}
\newcommand\light[1]{\carpetH=.01ex\carpetV=.01ex\carpet{\scriptstyle#1}}

\newcommand\rotiota{\hbox{\rotatebox[origin=c]{180}{$\iota$}}}

\newcommand{\ifalse}{\heavy\bot}
\newcommand{\itrue}{\heavy\top}
\newcommand{\inot}{\heavy{\neg}}
\newcommand{\iand}{\mathbin{\heavy\land}}
\newcommand{\ior}{\mathbin{\heavy\lor}}
\newcommand{\iimplies}{\mathrel{\heavy{\rightarrow}}}
\newcommand{\iimplied}{\mathrel{\heavy{\leftarrow}}}
\newcommand{\iequiv}{\mathrel{\heavy\leftrightarrow}}
\newcommand{\iforall}{\heavy\forall}
\newcommand{\iexists}{\heavy\exists}
\newcommand{\iforallconst}{\heavy{\mathsf{\forall}}}
\newcommand{\iexistsconst}{\heavy{\mathsf{\exists}}}
\newcommand{\ieq}{\mathrel{\heavy\approx}}
\newcommand{\ineq}{\mathrel{\heavy{\not\approx}}}
\newcommand{\ichoice}{\heavy{\varepsilon}}
\newcommand{\idef}{\heavy{\rotiota}}
\newcommand{\ibigwedge}{\heavy{\bigwedge}}
\newcommand{\ibigvee}{\heavy{\bigvee}}

\newcommand{\iforallconstlight}{\light{\mathsf{\forall}}}
\newcommand{\iexistsconstlight}{\light{\mathsf{\exists}}}
\newcommand{\ieqlight}{\mathrel{\light\approx}}
\newcommand{\iandlight}{\mathbin{\light\land}}
\newcommand{\ineqlight}{\mathrel{\light\not\approx}}
\newcommand{\inotlight}{\light\neg}

\newcommand\iform[1]{[#1]}

\let\oldSigma=\Sigma
\def\Sigma{\mathrm{\oldSigma}}
\let\oldPi=\Pi
\def\Pi{\mathrm{\oldPi}}

\newcommand\semiileft{\llbracket}
\newcommand\semiiright[1]{\rrbracket_{#1}}
\newcommand\semleft{\semiileft}
\newcommand\semright[2]{\semiiright{#1\smash{,#2}}}
\newcommand\semii[2]{{\semiileft{#1}\semiiright{#2}}}
\newcommand\sem[3]{\semii{#1}{#2\smash{,#3}}}

\newcommand\medrightarrow{\mathrel{{{\color{black}\relbar}\kern-0.9ex\rlap{\color{white}\ensuremath{\blacksquare}}\kern-0.9ex}\joinrel{\color{black}\rightarrow}}}
\newcommand\medleftarrow{\mathrel{{\color{black}\leftarrow}\kern-0.9ex\rlap{\color{white}\ensuremath{\blacksquare}}\kern-0.9ex\joinrel{{\color{black}\relbar}}}}
\newcommand\medleftrightarrow{\mathrel{\leftarrow\kern-1.685ex\rightarrow}}
\newcommand\Medrightarrow{\mathrel{{{\color{black}\Relbar}\kern-0.9ex\rlap{\color{white}\ensuremath{\blacksquare}}\kern-0.9ex}\joinrel{\color{black}\Rightarrow}}}
\newcommand\Medleftrightarrow{\mathrel{\Leftarrow\kern-1.685ex\Rightarrow}}

\newcommand\omicron{o}
\newcommand{\HL}{\ensuremath{\mathrm{HL}}}
\newcommand{\PHL}{\ensuremath{\mathrm{HL_\mathrm{p}}}}
\newcommand{\cst}[1]{{\mathsf{#1}}}
\newcommand\tyargs[1]{\langle#1\rangle}
\newcommand\UNIF{\mathrel{\smash{\stackrel{\lower.1ex\hbox{\ensuremath{\scriptscriptstyle ?}}}{=}}}}
\newcommand{\eq}{\mathbin{\approx}}
\newcommand{\noteq}{\mathbin{\not\approx}}
\newcommand{\seq}[1]{\ensuremath{(#1)_{n=1}^{\infty}}}
\newcommand{\impk}[4]{\ensuremath{#1 \impl^{#3}_{#4} #2}}
\newcommand{\infname}[1]{\textsc{#1}}
\newcommand{\namedsimp}[3]{\prftree[d][r]{\relax{\infname{#1}}}{\strut#2}{\strut#3}}
\newcommand{\namedsimpsc}[4]{\prftree[d][r]{\relax{\infname{#1}\quad\text{if~}$#4$}}{\strut#2}{\strut#3}}
\newcommand{\impl}{\ensuremath{\medrightarrow}}
\newcommand{\bigpimpl}[1]{\hookrightarrow_\mathrm{p}}
\newcommand{\bigfimpl}[1]{\hookrightarrow}
\newcommand{\eqlit}[2]{\ensuremath{#1 \eq #2}}
\newcommand{\neqlit}[2]{\ensuremath{#1 \noteq #2}}
\newcommand{\pospredlit}[1]{{#1}}
\newcommand{\negpredlit}[1]{\neglit{#1}}
\newcommand{\neglit}[1]{\neg{\>#1}}
\newcommand{\arbpredlit}[1]{(\neg)\>\pospredlit{#1}}
\renewcommand{\cor}{\ensuremath{\mathrel{\lor}}}
\newcommand{\tupleii}[2]{\ensuremath{\vv{#1}{\!}_{#2}}}
\newcommand{\tuple}[1]{\ensuremath{\vv{#1}}}
\newcommand{\flatres}{\mathbin{\smash{\hbox{\Large$\kern-.1ex\rtimes$}}}}
\newcommand{\flatresiter}{\mathbin\leadsto}
\newcommand{\withoutpred}[2]{\,\overline{\!#1}_{\!#2}}
\newcommand{\withpred}[2]{#1_{#2}}
\newcommand{\withpredpos}[2]{#1^+_{#2}}
\newcommand{\withpredneg}[2]{#1^-_{#2}}
\newcommand{\binset}[1]{#1{\downarrow}_2}
\newcommand\OK[1]{#1}
\newcommand\NOK[1]{{\color{red}\textbf{#1}}}
\newcolumntype{d}{D{.}{.}{0}}
\newcommand{\confrep}[2]{\begin{conf}#1\end{conf}\begin{rep}#2\end{rep}}

\definecolor{light-gray}{gray}{0.925}
\newcommand\MAX[1]{\smash{\setlength{\fboxsep}{.3ex}\colorbox{light-gray}{\ensuremath{\vphantom{('q}{#1}}}}}

\maketitle

\newcommand{\Sigmaty}{\Sigma_\mathsf{ty}}
\newcommand{\LL}{\mathscr{L}}
\newcommand{\UU}{\mathscr{U}}
\newcommand{\VV}{\mathscr{V}}
\newcommand{\Vty}{\VV_\mathsf{ty}}
\newcommand{\III}{\mathscr{I}}
\newcommand{\II}{\mathscr{J}}
\newcommand{\IIty}{\II_\mathsf{ty}}
\newcommand{\IIIty}{\III_\mathsf{ty}}
\newcommand{\DD}{\mathscr{D}}

\begin{sloppypar}
\begin{abstract}
We generalize several propositional preprocessing techniques to higher-order
logic, building on existing first-order generalizations. These techniques
eliminate literals, clauses, or predicate symbols from the problem, with the aim
of making it more amenable to automatic proof search. We also introduce a new
technique, which we call \emph{quasipure literal elimination}, that strictly
subsumes pure literal elimination. The new techniques are implemented in the
Zipperposition theorem prover. Our evaluation shows that they sometimes help
prove problems originating from Isabelle formalizations and the TPTP library.
\end{abstract}
\end{sloppypar}

\section{Introduction}
\label{sec:introduction}

Processing techniques are an important optimization in SAT (Boolean
satisfiability) solving. Following up on early work in the 1990s
\cite{go-1992-so-pred-elim,hjo-1996-scan}, there has recently been renewed
interest in adapting propositional techniques to first-order logic
\cite{kk-2016-pe-fol,ksstb-2017-blockedfol,vukmirovic-et-al-2023-sat}, resulting
in a noticeable increase of the success rate of automatic theorem provers
based on superposition \cite{bachmair-ganzinger-1994}.

In this \paper, we consider the extension of four main classes of SAT
preprocessing techniques to classical higher-order logic. These extensions are
called
hidden-literal-based elimination (Section~\ref{sec:hidden-literal-based-elimination}),
predicate elimination (Section~\ref{sec:predicate-elimination}),
blocked clause elimination (Section~\ref{sec:blocked-clause-elimination}),
and quasipure literal elimination (Section~\ref{sec:quasipure-literal-elimination}).
Elimination techniques make the problem simpler and hence possibly more
amenable to automatic proof search. The techniques can be used either to
preprocess the problem or to transform the prover's current clause set during
proof search, a use that is sometimes called \emph{inprocessing}.

An advantage of preprocessing is its greater generality: Preprocessing
techniques can be used in tandem with any higher-order proof calculus, as long
as the calculus is built around a notion of clause. We assume that a clausifier
is run as a preprocessor and introduces some clausal structure. The more clausal
structure it produces, the more effective the elimination techniques can be.
Examples of provers compatible with the techniques are $\lambda$E
\cite{vukmirovic-et-al-2023-lambdae}, Leo-III \cite{steen-benzmueller-2018},
Vampire \cite{bhayat-reger-2020-combsup}, and Zipperposition
\cite{bentkamp-et-al-2021-hosup}.

Our setting is a clausal version of classical rank-1 polymorphic higher-order
logic (Section~\ref{sec:clausal-higher-order-logic}). Since previous work focused
on an untyped or monomorphic logic, our work also generalizes this aspect.
Higher-order logic also distinguishes between standard and general (Henkin)
semantics. Since calculi are proved complete with respect to general semantics,
this is the semantics we adopt. Our techniques preserve the unsatisfiability of
problems, and therefore their provability by a complete prover. In addition,
they preserve the satisfiability of problems, and therefore their unprovability
by a sound prover.

The main difficulty we face in higher-order logic concerns predicate elimination
and blocked clause elimination, which are both based on resolution. In
first-order logic, a literal $\lnot\>\cst{p}(\tuple{s})$ can only be resolved
against a literal $\cst{p}(\tuple{t})$, with the same predicate symbol
$\cst{p}$. By contrast, in higher-order logic, $\lnot\>\cst{p}\>\tuple{s}$ can
be resolved against any variable-headed literal $y\,\tuple{t}$, for example by
taking $y := \lambda\tuple{x}.\>\cst{p}\>\tuple{s}$, where the bound variables
$\tuple{x}$ are fresh. We will see that this issue can be circumvented: A key
finding of this \paper{} is that we can ignore variable-headed resolvents and
focus on the $\cst{p}$-literals.

Another potential
issue is that higher-order logic can have infinitely many resolvents. For
example, resolving $\lnot\>\cst{p}\>(\cst{f}\>(y\>\cst{a}))$ and
$\cst{p}\>(y\>(\cst{f}\>\cst{a}))$ produces infinitely many conclusions of the
form $\cst{p}\>(\cst{f}\,(\ldots(\cst{f}\>\cst{a})\ldots))$. Again, the issue is
not as severe as it looks, because the variant of resolution we use---flat
resolution---does not unify terms.

\begin{exa}
\label{ex:blocked-choice}
To give a flavor of our elimination techniques, let us review an example
involving blocked clause elimination.
Let $\cst{a} : \iota$, $\cst{p} : \iota \to \omicron$, and
$\cst{choice} : (\iota \to \omicron) \to \iota$ be symbols,
where $\omicron$ is the type of Booleans and $\iota$ is a base type.
Consider the clause set
\[N =
\{\lnot\>y\>z \cor y\>(\cst{choice}\> y){,}\;\,
\cst{q}\>\cst{a}{,}\;\,
\lnot\>\cst{q}\>(\cst{choice}\> \cst{q}){,}\;\,
\cst{p}\>\cst{a}{,}\;\,
\lnot\> \cst{p}\>z \cor z \eq \cst{a}\}
\]
The set is clearly inconsistent because the $\{y \mapsto \cst{q}{,}\; z \mapsto
\cst{a}\}$ instance of the first clause is inconsistent with the second and
third clauses.

Under some basic conditions, a clause $C$ containing a literal $L$ is said to be
``blocked'' if all of its so-called binary flat $L$-resolvents with clauses
from $N \setminus \{C\}$ are tautologies. (We will see in
Section~\ref{sec:blocked-clause-elimination} what this means exactly.)
The fourth clause is blocked by its literal $\cst{p}\>\cst{a}$ because its only
binary flat $(\cst{p}\>\cst{a})$-resolvent, with the fifth clause,
is the tautology $\cst{a} \noteq z \cor z \eq \cst{a}$.
Similarly, the fifth clause is blocked by its literal $\lnot\> \cst{p}\>z$
because its only binary flat $(\lnot\> \cst{p}\>z)$-resolvent, with the fourth
clause, is the same tautology. Either or both clauses can be removed without
from $N$ losing unsatisfiability.
\end{exa}

All the techniques are implemented in the higher-order prover Zipperposition
(Section~\ref{sec:implementation}), allowing us to measure their effectiveness on
benchmarks originating from Isabelle
\cite{nipkow-et-al-2002} formalizations
and the TPTP library \cite{sutcliffe-2017-tptp} (Section~\ref{sec:evaluation}).
The raw experimental data are available online.\footnote{\url{https://zenodo.org/record/7448169}}

We will mention closely related research in the relevant sections. We point to
Vukmirovi\'c et al.~\cite[Section~8]{vukmirovic-et-al-2023-sat} for a more detailed
discussion of related work.

\section{Clausal Higher-Order Logic}
\label{sec:clausal-higher-order-logic}

The logic we use as a basis of our work is a rank-1 polymorphic higher-order
logic with general semantics and both functional and Boolean
extensionality. It corresponds essentially to the logic embodied by the TPTP TH1
format \cite{kaliszyk-et-al-2016}, including Hilbert choice.
Our conventions largely follow those used by Bentkamp, Blanchette, Tourret,
and Vukmirovi\'c to define $\lambda$-superposition
\cite{bentkamp-et-al-2021-hosup}. Our presentation is based on theirs.

In higher-order logic, formulas are simply terms of Boolean type. Briefly, our
version of the logic also has a clausal outer structure, as found in several
higher-order provers. Clauses are then built around terms as an extra layer of
structure. We write formula-level Boolean operators in bold (e.g., $\inot$,
$\ior$, $\iforall$) to distinguish them from clause-level operators.

\oursubsection{Syntax}

Let us define the syntax of our logic more precisely, starting with types.
Throughout this \paper, we use the notation $\tupleii{a}{n}$, or simply
$\tuple{a}$, to denote an $n$-tuple $(a_1, \dotsc, a_n)$. Sometimes we might
also write $\tuple{a_i}$, meaning $(a_{i1}, \dotsc, a_{in})$, to be
distinguished from $\tupleii{a}{i} = (a_1, \dotsc, a_i)$.

We start by fixing an infinite set $\Vty$ of type variables. A set $\Sigmaty$ of
type constructors with associated arities is a \emph{type signature}. We require
the presence of a nullary type constructor $\omicron$, for Booleans, and a
binary type constructor $\to$ for functions. We let $\alpha$
range over type variables and $\kappa$ over type constructors.
A \emph{type}, ranged over by $\tau$ and $\upsilon$, is defined inductively to
be either a variable $\alpha \in \Vty$ or an expression
$\kappa(\tupleii{\tau}{n})$, where $\kappa$ is an $n$-ary type constructor
and $\tuple{\tau}$ in an $n$-tuple of types. If $n = 0$, we write $\kappa$
instead of $\kappa()$. In addition, expressions with $\kappa = {\to}$ are written
in infix notation, as $\tau_1 \to \tau_2$.
A \emph{type declaration} is an expression $\Pi\tupleii{\alpha}{m}.\>\tau$, where
$\tuple{\alpha}$ consists of distinct type variables and
all the type variables occurring in $\tau$ belong to $\tuple{\alpha}$. If $m = 0$,
we write $\tau$ instead of $\Pi.\>\tau$.

Next, we fix a type signature $\Sigmaty$ and a set $\VV$ of term variables with
associated types. We require that there are infinitely many variables of each
type. A \emph{term signature} is a set $\Sigma$ of \emph{symbols} $\cst{a},
\cst{b}, \cst{c}, \cst{f}, \cst{p}, \cst{q}, \dotsc,$ each associated with a
type declaration. (Often, symbols are called ``constants'' in the higher-order
logic literature.)
We write $\cst{f} : \Pi\tuple{\alpha}.\>\tau$ to indicate
that symbol $\cst{f}$ has type signature $\Pi\tuple{\alpha}.\>\tau$.
We require the presence in $\Sigma$ of the logical symbols
$\hbox{$\ifalse$},\allowbreak \hbox{$\itrue$},\allowbreak
\hbox{$\inot$},\allowbreak \hbox{$\iand$},\allowbreak \hbox{$\ior$},\allowbreak
\hbox{$\iimplies$},\allowbreak \hbox{$\iforall$},\allowbreak
\hbox{$\iexists$},\allowbreak \hbox{$\ieq$},\allowbreak \hbox{$\ineq$}$
with their usual type declarations (e.g., $\Pi\alpha.\>\alpha \to \alpha
\to \omicron$ for $\ieq$). We also assume the presence of the Hilbert choice
operator $\hbox{\ichoice} : \Pi\alpha.\>(\alpha\to\omicron)\to\alpha$.
We will generally leave the signature implicit, assuming some fixed signature
$\Sigma$.

Polymorphic higher-order logic does not distinguish between function and
predicate symbols. Nonetheless, it will be convenient to refer to symbols that
can yield Boolean values as predicate symbols. Specifically, we will call
a symbol with type declaration $\cst{p} : \Pi\tuple{\alpha}.\>\tau$ that admits
an instance of the form $\cst{p}\tyargs{\tuple{\tau}} : \>\upsilon_1 \to \cdots \to
\upsilon_n \to \omicron$ a \emph{predicate symbol}.

We now introduce three notions of terms:\ raw $\lambda$-terms, $\lambda$-terms, and
actual terms. The $\lambda$-terms are $\alpha$-equivalence classes of raw
$\lambda$-terms, and the actual terms are $\beta\eta$-equivalence classes of
$\lambda$-terms.

More precisely, the \emph{raw $\lambda$-terms} are defined inductively as follows:
\begin{itemize}
\item Every variable $x$ of type $\tau$ is a raw $\lambda$-term of type $\tau$.

\smallskip

\item If $\cst{f}$ has type declaration $\Pi\tuple{\alpha}_m.\>\tau$ in $\Sigma$ and
  $\tuple{\upsilon}_m$ is a tuple of types, called \emph{type arguments}, then
  $\cst{f}\tyargs{\tupleii{\upsilon}{m}}$ is a
  raw $\lambda$-term of type $\tau\{\tupleii{\alpha}{m} \mapsto \tupleii{\upsilon}{m}\}$.
  If $m = 0$, we simply write $\cst{f}$ instead of $\cst{f}\tyargs{}$.

\smallskip

\item If $x$ is a variable of type $\tau$ and $t$ is a term of type $\upsilon$,
  then the \emph{$\lambda$-abstraction}
  $\lambda x.\> t$ is a raw $\lambda$-term of type $\tau\to\upsilon$.

\smallskip

\item If $s$ is a term of type $\tau\to\upsilon$ and $t$ is a term of type
  $\tau$, then the \emph{application} $s\>t$ is a raw $\lambda$-term of type
  $\upsilon$.
\end{itemize}

We abbreviate $\lambda x_1.\,\dots\,\lambda x_n.\> t$ to
$\lambda x_1\,\dots\,x_n.\> t$ or $\lambda\tupleii{x}{n}.\> t$,
$\iforall\>(\lambda x.\>t)$ to $\iforall x.\>t$, and similarly
for $\iexists$.
Abusing notation, we also write $t\>\tupleii{u}{n}$ for $t\>u_1\,\dots\,u_n$.
We assume standard notions of free and bound variables as well as subterms.
To indicate that a term~$t$ has a type~$\tau$, we write $t : \tau$.

\looseness=-1
The $\alpha$-renaming rule of the $\lambda$-calculus relates two raw
$\lambda$-terms if the two are equal up to (capture-avoiding) renaming of their
bound variables. For example, $\lambda x.\>\cst{f}\>x\>x$ and
$\lambda y.\>\cst{f}\>y\>y$ are $\alpha$-renamings of each other.
Two raw $\lambda$-terms are $\alpha$-equiv\-alent if they can be made equal by
$\alpha$-renaming their subterms.
The \emph{$\lambda$-terms} consist of the equivalence classes of raw
$\lambda$-terms modulo $\alpha$-equivalence of subterms. We assume the standard
notion of (capture-avoiding) substitution on $\lambda$-terms. We also
define a notion of replacement: $t[\cst{f} \mapsto u]$ denotes the term obtained
by replacing all occurrences of $\cst{f}$ in $t$ with a term $u$ of the same
type.

The $\beta$-reduction rule relates two $\lambda$-terms
if the first one has the form $(\lambda x.\>s)\>t$ and the second one has
the form $s\{x \mapsto t\}$, where bound variables in $s$ are implicitly renamed to
avoid capture. The $\eta$-reduction rule relates two $\lambda$-terms
if the first one has the form $\lambda x.\>t\>x$ and the second one has the
form $t$, where $t$ contains no free occurrences of $x$. For example,
$(\lambda x.\>\cst{f}\>x\>x)\>\cst{b}$ $\beta$-reduces to
$\cst{f}\>\cst{b}\>\cst{b}$, and $\lambda x.\>\cst{f}\>x$ $\eta$-reduces to
$\cst{f}$. Two $\lambda$-terms are $\beta\eta$-equivalent if they can be made
equal by $\beta$- and $\eta$-reducing their subterms.
The \emph{terms} consist of the equivalence classes of $\lambda$-terms modulo
$\beta\eta$-equivalence of subterms.
The \emph{formulas} are the terms of type $\omicron$.
We let $\varphi, \psi$ range over formulas.

\begin{conventionx}
When inspecting the structure of a term, we will consider a representative in
$\eta$-short $\beta$-normal form, obtained by exhaustively applying $\beta$- and
$\eta$-reduction on subterms. Such a representative is unique up to
$\alpha$-equivalence.
\end{conventionx}

An alternative to the $\eta$-short $\beta$-normal form is the $\eta$-long
$\beta$-normal form, in which unapplied functions are $\eta$-expanded rather
than $\eta$-reduced (i.e., $\eta$-reduction is applied in reverse on these). The
techniques presented in this \paper{} work unchanged in such a setting.

Two terms $t, u$ are \emph{unifiable} if there exists a substitution $\sigma$
such that $t\sigma = u\sigma$. For example, $\cst{a}$ and $y\>\cst{a}$ are
unifiable by taking $y := \lambda x.\>x$. (This works because terms are equal up
to $\beta$-reduction.) Unification of higher-order terms is undecidable. Types,
however, are isomorphic to first-order terms, and hence their unification
problem is decidable. Moreover, if a unifier exists, then a most general unifier
exists (up to the naming of variables). For example, the most general unifier
for the unification problem $\cst{pair}(\alpha, \cst{nat}) \UNIF
\cst{pair}(\cst{int}, \beta)$ is $\{\alpha \mapsto \cst{int}{,}\; \beta \mapsto
\cst{nat}\}$.

Finally, we define literals and clauses on top of terms.
An atom is an equation $\eqlit{s}{t}$ corresponding to an unordered pair $\{s,
t\}$. (We reserve $=$ for syntactic equality of terms.)
A literal is an equation $\eqlit{s}{t}$ or a disequation $\neqlit{s}{t}$.
Given a predicate symbol~$\cst{p}$, the literal
$\eqlit{\cst{p}\tyargs{\tuple{\alpha}}\>\tuple{s}}{\itrue}$
is abbreviated to $\cst{p}\tyargs{\tuple{\alpha}}\>\tuple{s}$, and its
complement
$\neqlit{\cst{p}\tyargs{\tuple{\alpha}}\>\tuple{s}}{\itrue}$ is abbreviated to
$\neglit{\cst{p}\tyargs{\tuple{\alpha}}\>\tuple{s}}$.
Moreover, a \emph{$\cst{p}$-literal} is a literal of
the form $(\lnot)\,\cst{p}\tyargs{\tuple{\tau}}\>\tuple{t}$. Note that it is
possible in higher-order logic for a non-$\cst{p}$-literal to contain $\cst{p}$,
or even for the arguments $\tuple{t}$ of a $\cst{p}$-literal to contain
$\cst{p}$.
Given a literal $L$, we write $\neglit{L}$ to denote its complement, with
$\neglit{\neglit{L}} = L$.

A clause $C$ is a finite multiset of literals, written as $L_1 \cor \cdots \cor
L_n$ and interpreted disjunctively. Clauses are often defined as sets of
literals, but multisets are better behaved with respect to substitution: If $C$
has $n$~literals, so has $C\sigma$ regardless of whether $\sigma$ unifies some
of $C$'s literals. The type and term variables contained in a clause are
implicitly quantified universally. (Within terms, $\iforall$ and $\iexists$ can
be used to quantify over term variables.)

A type, term, or clause is \emph{monomorphic} if it contains no type variables.
A term or clause is \emph{closed} if it contains no free term variables.

It is sometimes useful to encode a clause $C$ as a formula. The formula
$\iform{C}$ representing the clause $C$ is defined by replacing the nonbold
symbols $\eq$, $\noteq$, and $\lor$ by their bold counterparts $\ieq$, $\ineq$,
and $\ior$. This formula is uniquely defined up to the orientation of the
equations and the order of the literals, neither of which affects the semantics.

\oursubsection{Semantics}

A \emph{type interpretation} $\IIIty = (\UU, \IIty)$ consists of two components.
The \emph{universe} $\UU$ is a collection of nonempty sets, the
\emph{domains}. We require $\UU$ to contain the domain $\{0,1\}$, where 
0 represents falsehood and 1 represents truth.
The function $\IIty$ associates
with each $n$-ary type constructor~$\kappa$
a function
$\IIty(\kappa) : \UU^n \rightarrow \UU$,
with the requirements that
$\IIty(\omicron) = \{0,1\}$
and that the set $\IIty(\to\nobreak)(\DD_1,\allowbreak\DD_2)$
is a subset of the (total) function space from $\DD_1$ to $\DD_2$
for all domains $\DD_1,\DD_2\in\UU$.
The semantics is \emph{standard} if
$\IIty({\to})(\DD_1,\DD_2)$ is the entire function space for all $\DD_1,\DD_2\in\UU$.
A \emph{type valuation} $\xi$ is a function that maps every type variable to a domain.

The \emph{denotation} of a type in a type interpretation $\IIIty$ under a type
valuation $\xi$ is defined by the recursive equations
$\sem{\alpha}{\IIIty}{\xi}=\xi(\alpha)$ and
$\sem{\kappa(\tuple{\tau})}{\IIIty}{\xi}=
\IIty(\kappa)(\sem{\tuple{\tau}}{\IIIty}{\xi})$.
For monomorphic types $\tau$, the denotation does not depend on the valuation
$\xi$, allowing us to write $\semii{\tau}{\IIIty}$ instead of
$\sem{\tau}{\IIIty}{\xi}$.

A type valuation $\xi$ can be extended to be a \emph{valuation} by additionally
assigning an element $\xi(x)\in\sem{\tau}{\IIIty}{\xi}$ to each variable $x : \tau$.
We will sometimes use partial functions as valuations if the values outside the
function's domain are irrelevant.
An \emph{interpretation function} $\II$ for a type interpretation $\IIIty$ associates with each symbol
$\cst{f} : \Pi{\tupleii{\alpha}{m}}.\>\tau$ and domain tuple $\tuple{\DD}\in\UU^m$
a value
$\II(\cst{f},\tuple{\DD}) \in
\sem{\tau}{\IIIty}{[\tuple{\alpha}\mapsto\tuple{\DD}]}$,
We require that the logical symbols are interpreted in the usual way in terms of
0 and 1. For example,
$\II(\ifalse) = 0$ and $\II(\ior)(v,w) = \max\,\{v,w\}$.
Note that in the presence of $\ichoice$ in the signature,
every type $\tau$ must be interpreted by a nonempty set, for
$\II(\ichoice\tyargs{\tau}\>(\lambda x.\>\ifalse))$ to be defined.

We require the comprehension principle to hold: Every function designated
by a $\lambda$-abstraction is contained in the domain associated with its type.
We initially allow $\lambda$-abstractions to designate arbitrary elements of the
domain. This enables us to define the denotation of a term. Then we impose
restrictions to rule out undesirable $\lambda$-abstraction designations.

A \emph{$\lambda$-designation function} $\LL$
for a type interpretation $\IIIty$ is a function that maps
a valuation $\xi$ and a $\lambda$-abstraction of type $\tau$ to an element
of $\sem{\tau}{\IIIty}{\xi}$.
An \emph{interpretation} $\III = (\IIIty,\II,\LL)$ combines
a type interpretation, an interpretation function, and a $\lambda$-designation
function.

For an interpretation~$\III$ and a valuation~$\xi$, the \relax{denotation of a term} is defined
recursively as
$\sem{x}{\III}{\xi} = \xi(x)$,
$\sem{\cst{f}\tyargs{\tuple{\tau}}}{\III}{\xi} =
  \II(\cst{f},\sem{\tuple{\tau}}{\IIIty}{\xi})$,
$\sem{s\>t}{\III}{\xi} = \sem{s}{\III}{\xi} (\sem{t}{\III}{\xi})$, and
$\sem{\lambda x.\> t}{\III}{\xi} = \LL(\xi,\lambda x.\> t)$.
For monomorphic closed terms $t$, the
denotation does not depend on the valuation $\xi$,
allowing us to write $\semii{t}{\III}$ instead of $\sem{t}{\III}{\xi}$.

An interpretation $\III$ is \emph{proper} if $\sem{\lambda
x \mathbin: \upsilon.\>t}{\III}{\xi}(v) = \sem{t}{\III}{\xi[x\mapsto v]}$ for every
$\lambda$-ab\-straction $\lambda x \mathbin: \upsilon.\>t$, every valuation~$\xi$,
and every value $v \in \sem{\upsilon}{\IIIty}{\xi}$.
We will assume throughout that all interpretations are proper and will
construct only proper interpretations.
If a type interpretation
$\IIIty$ and an interpretation function $\II$ can be extended by a $\lambda$-designation function $\LL$ to an
interpretation $(\IIIty,\II,\LL)$, then this $\LL$ is unique
\cite[Proposition~2.18]{fitting-2002}.

Given an interpretation $\III$ and a valuation $\xi$, an equation $s\eq t$ is
\relax{true} if
$\sem{s}{\III}{\xi}$ and $\sem{t}{\III}{\xi}$ are equal
and it is \relax{false} otherwise.
A disequation $s\noteq t$ is true if $s \eq t$ is false.
A clause is \relax{true} if at least one of its literals is true.
A clause set is \relax{true} if all the clauses it contains are true.
An interpretation $\III$ is a \emph{model} of a clause set $N$,
written $\III \models N$, if $N$ is true in $\III$ for every valuation~$\xi$.

A clause $C$ is a \emph{tautology} if $\III \models \{C\}$ for every
interpretation $\III$.
It is \emph{satisfiable} if there exists an interpretation $\III$
such that $\III \models \{C\}$; otherwise, it is \emph{unsatisfiable}.
Notice that these concepts are defined with respect to general, and not
necessarily standard, interpretations.

It is sometimes convenient to assume that the interpretation of
monomorphic types is injective on types---that is, for all
monomorphic $\tau, \upsilon$, if
$\semii{\tau}{\IIIty} = \semii{\upsilon}{\IIIty}$, then $\tau = \upsilon$.
This assumption is reasonable because the elements of $\IIty(\kappa)$,
where $\kappa \notin \{\omicron,\to\}$, are immaterial
and can be renamed if desired and because the set-theoretic representation of
functions, as nonempty sets of pairs, preserves this property.
(The sets of pairs are nonempty thanks to the presence of $\ichoice$ in the
signature, as noted above.) We call this principle the \emph{distinct domain
assumption}.

\section{Hidden-Literal-Based Elimination}
\label{sec:hidden-literal-based-elimination}

In propositional logic \cite{hjb-2011-big-simplification} and clausal first-order
logic \cite{vukmirovic-et-al-2023-sat}, a hidden literal for a literal $L$ and a
clause set $N$ is a literal that can be added or removed from any clause
containing $L$ without affecting its truth value in models of $N$. Several
elimination techniques are based on hidden literals; in particular, hidden
literal elimination removes hidden literals from clauses in which they
occur.

\begin{exa}
\label{ex:hidden-lit}
Consider the literal $\cst{c}$ and the clause set $N = \{\lnot\>\cst{a} \lor
\cst{b}{,}\; \lnot\>\cst{b} \lor \cst{c}\}$. Then $\cst{b}$ is a hidden
literal: Since $\cst{b}$ implies $\cst{c}$ according to $N$, we have
that $\cst{b} \lor \cst{c}$ and $\cst{c}$ have the same truth value in models of
$N$. Similarly, since $\cst{a}$ implies $\cst{c}$ (by transitivity),
$\cst{a}$ is also a hidden literal. Thus, hidden literal elimination would
reduce the clause $\cst{a} \lor \cst{b} \lor \cst{c}$ to~$\cst{c}$.
\end{exa}

The first-order definitions of hidden literals, hidden tautologies, hidden
literal elimination, hidden tautology elimination, failed literal elimination,
hidden tautology reduction, and failed literal reduction
\cite{vukmirovic-et-al-2023-sat} work verbatim in clausal higher-order logic,
for both preprocessing and inprocessing.
All the techniques preserve satisfiability and unsatisfiability.
We call them collectively \emph{hidden-literal-based elimination} (HLBE).

\section{Predicate Elimination}
\label{sec:predicate-elimination}

\emph{Predicate elimination} (PE)
\cite{go-1992-so-pred-elim,kk-2016-pe-fol,vukmirovic-et-al-2023-sat}
is a set of techniques that remove all occurrences of some
predicate symbol in a first-order problem by resolving clauses that contain
it. Predicate elimination generalizes variable
elimination in propositional logic \cite{sp-04-niver,cs-00-zres}.
In this section, we generalize two specific techniques to higher-order logic.

\oursubsection{Singular Predicate Elimination}

The first technique is called singular (or ``non-self-referential'') predicate
elimination. Definitions \ref{def:occurs-deep}~to~\ref{def:flat-res-set} below
are adapted from monomorphic first-order logic.

\begin{defi}
\label{def:occurs-deep}
A predicate symbol $\cst{p}$ \emph{occurs deep} in a clause $C$
if it occurs in a position other than as the head of the atom of a
$\cst{p}$-literal somewhere in $C$. The symbol $\cst{p}$ \emph{occurs deep} in
a clause set $N$ if it occurs deep in one of its clauses $C \in N$.
\end{defi}

\begin{defi}
A predicate symbol $\cst{p}$ is called \emph{singular} for a
clause $C$ if these conditions are met:
\begin{enumerate}
\item $C$ contains at most one $\cst{p}$-literal;
\item\label{itm:singular-ii} $\cst{p}$ does not occur deep in $C$.
\end{enumerate}
The symbol $\cst{p}$ is \emph{singular} for a clause set $N$ if
$\cst{p}$ is singular for every clause contained in $N$.
\end{defi}

\begin{sloppypar}
\begin{defi}
\label{def:polymorphism-safe}
The clause $C = (\lnot)\,\cst{p}\tyargs{\tuple{\tau}}\>\tuple{t} \lor C'$ is
\emph{polymorphism-safe}
for its literal $(\lnot)\,\cst{p}\tyargs{\tuple{\tau}}\>\tuple{t}$
if all type variables occurring in $C$ occur in $\tuple{\tau}$.
A clause $C$ is \emph{polymorphism-safe} for $\cst{p}$ if it is polymorphism-safe
for all its $\cst{p}$-literals.
A clause set $N$ is \emph{polymorphism-safe} for $\cst{p}$ if all the clauses
it contains are polymorphism-safe for $\cst{p}$.
\end{defi}
\end{sloppypar}

\begin{defi}
\label{def:flat-resolvent}
Let $C = \cst{p}\tyargs{\tuple{\tau}}\>\tupleii{s}{n} \lor C'$ and
$D = \lnot\>\cst{p}\tyargs{\tuple{\upsilon}}\>\tupleii{t}{n}
\lor D'$. The \emph{flat resolvent} of $C$ and $D$ on
$\cst{p}\tyargs{\tuple{\tau}}\>\tupleii{s}{n}$ and
$\lnot\>\cst{p}\tyargs{\tuple{\upsilon}}\>\tupleii{t}{n}$ is
the clause $(s_1 \noteq t_1 \cor \cdots
\cor s_n \noteq t_n \cor C' \cor D')\sigma$,
where $\sigma$ is the most general unifier of
$\tuple{\tau} \UNIF \tuple{\upsilon}$. The flat resolvent is not defined if
$\tuple{\tau}$ and $\tuple{\upsilon}$ are not unifiable.
\end{defi}

Flat resolvents were already present in the first-order setting
\cite{go-1992-so-pred-elim,kk-2016-pe-fol,vukmirovic-et-al-2023-sat}. They
are reminiscent of Huet's approach to unification in higher-order resolution
\cite{huet-1973}, which is also used by Benzm\"uller and colleagues
\cite{benzmueller-kohlhase-1998,steen-benzmueller-2021}.

\begin{defi}
\label{def:flat-res-set}
Let $M, N$ be clause sets and $\cst{p}$ be a singular predicate for $M$.
Let $\flatresiter$ be the following relation on clause set pairs and clause
sets:
\begin{enumerate}
\item
\label{itm:flat-res-set-res}
$(M{,}\> \{\arbpredlit{\cst{p}\tyargs{\tuple{\tau}}\>\tuple{s}} \cor C'\} \uplus N) \flatresiter (M{,}\; N' \cup N)$ if
$N'$ is the set that consists of all clauses, up to variable renaming, that are flat resolvents
on $\arbpredlit{\cst{p}\tyargs{\tuple{\tau}}\>\tuple{s}}$ and
$\arbpredlit{\negpredlit{\cst{p}\tyargs{\tuple{\upsilon}}\>\tuple{t}}}$
of $\arbpredlit{\cst{p}\tyargs{\tuple{\tau}}\>\tuple{s}} \cor C'$ and a clause
$\arbpredlit{\negpredlit{\cst{p}\tyargs{\tuple{\upsilon}}\>\tuple{t}}} \cor D'$
from $M$ as premises. The premises' variables are renamed apart.

\smallskip
\item \label{itm:flatresiter-exit}
 $(M, N) \flatresiter N$ if $N$ contains no $\cst{p}$-literals.
\end{enumerate}
\noindent The \emph{resolved set} $M \flatres_{\!\cst{p}} N$ is the
$(\Sigma\setminus \{\cst{p}\})$-clause set $N'$ such that $(M,N)
\flatresiter^* N'$.
\end{defi}

For finite sets $M, N$, the resolved set $N'$ is reached in a finite number of
steps, and it is unique up to variable renaming. The argument is as for
first-order logic
\cite[Lemma~4.4]{vukmirovic-et-al-2023-sat}.
Note that the result may contain deep occurrences of $\cst{p}$ if the initial
set $N$ contains such occurrences.

\begin{defi}
Let $N$ be a clause set and $\cst{p}$ be a singular predicate symbol for
$N.$ Let $\withpredpos{N}{\cst{p}}$ consist of all clauses belonging to $N$ that contain a positive $\cst{p}$-literal,
let $\withpredneg{N}{\cst{p}}$ consist of all clauses belonging to $N$ that contain a negative $\cst{p}$-literal,
let $\withpred{N}{\cst{p}} = \withpredpos{N}{\cst{p}} \cup \withpredneg{N}{\cst{p}}$,
and let $\withoutpred{N}{\cst{p}} = N \setminus \withpred{N}{\cst{p}}$.
\end{defi}

\begin{defi}
\label{def:pred-elim}
Let $N$ be a finite clause set that is polymorphism-safe for $\cst{p}$ and
$\cst{p}$ be a singular predicate for $N.$
\emph{Singular predicate elimination} (SPE) of $\cst{p}$ in $N$ replaces $N$ by
the $(\Sigma \setminus \{\cst{p}\})$-clause set
$\withoutpred{N}{\cst{p}} \cup (\withpredpos{N}{\cst{p}} \flatres_{\!\cst{p}} \withpredneg{N}{\cst{p}})$.
\end{defi}

SPE preserves satisfiability: The only clauses added are flat resolvents, and
flat resolution is clearly sound. In first-order logic, SPE also preserves
unsatisfiability \cite[Theorem~1]{kk-2016-pe-fol}. With a small restriction on
polymorphism, this result extends to polymorphic higher-order logic. Also
note that deep occurrences of $\cst{p}$ are not possible in the result, because
of the requirement that $\cst{p}$ be a singular predicate for the input $N.$

\begin{exa}
Thanks to the use of flat resolvents, the unification work is left to the proof
calculus. This is convenient, because higher-order unification is undecidable
and hence, in general, could not be done exhaustively in a preprocessing
technique.

Consider the clause set $N =
\{\cst{p}\> z\> z \cor \cst{q}\> z{,}\allowbreak\;
  \lnot\> \cst{p}\> (\cst{f}\> (y\> \cst{a}))\> (y\> (\cst{f}\> \cst{a})){,}\allowbreak\;
  \lnot\> \cst{q}\>\cst{b}\}$.
Applying SPE to $\cst{p}$ transforms
$N$ into $N' = \{
  \cst{f}\> (y\> \cst{a}) \noteq z
  \cor y\> (\cst{f}\> \cst{a}) \noteq z
  \cor \cst{q}\> z{,}\;
  \lnot\> \cst{q}\>\cst{b}\}$.
It is then the calculus's task to enumerate values for
$z$ that solve the unification problem $z \UNIF \cst{f}\> (y\> \cst{a}) \UNIF
y\> (\cst{f}\> \cst{a})$. These values are
\[\cst{f}\>\cst{a}{,}~
  \cst{f}\> (\cst{f}\> \cst{a}){,}~
  \cst{f}\> (\cst{f}\> (\cst{f}\> \cst{a})){,}~
  \ldots\]

In $\lambda$-superposition \cite{bentkamp-et-al-2021-hosup}, this enumeration
would be the responsibility of the \textsc{ERes} inference rule. The
first clause in $N'$ could be simplified to
$y\> (\cst{f}\> \cst{a}) \noteq \cst{f}\> (y\> \cst{a})
\cor \cst{q}\> (\cst{f}\> (y\> \cst{a}))$, by eliminating $z$.
Then \textsc{ERes} would unify the two sides of the first literal, producing
one conclusion per unifier. Since there are infinitely many unifiers, this would
lead to infinitely many conclusions:
\[\cst{q}\> (\cst{f}\> \cst{a}){,}~
  \cst{q}\> (\cst{f}\> (\cst{f}\> \cst{a})){,}~
  \cst{q}\> (\cst{f}\> (\cst{f}\> (\cst{f}\> \cst{a}))){,}~
  \ldots\]
Using dovetailing, this infinite enumeration can be interwoven with other
inferences and other activities of the prover
\cite[Section~6]{vukmirovic-et-al-2023-hot}.
\end{exa}

\begin{exa}
\label{ex:spe-ab}
Consider the satisfiable clause set $N =
\{\cst{p}\> (\cst{f}\>z) \cor \cst{q}\> z{,}\allowbreak\;
  \lnot\> \cst{p}\> (\cst{f}\>\cst{a})\}$.
SPE of $\cst{p}$ transforms $N$ into the equally satisfiable set
$N' = \{\cst{f}\>\cst{a} \noteq \cst{f}\>z \cor \cst{q}\> z\}$.

Note that although $\cst{p}$ is absent from $N'$,
a predicate that meets its specification can be created based on the
first clause of $N$, in which the $\cst{p}$-literal is positive.
This predicate is
$\lambda x.\> \iexists y.\> x \ieq \cst{f}\>y \iand \inot\>\cst{q}\>y$.
In general, we would use all the clauses in which the $\cst{p}$-literal is
positive and ignore the other clauses. (Dually, we could have defined the
predicate in terms of the clauses in which the $\cst{p}$-literal is negative.)
This predicate makes $\cst{p}$ true only when it must be
true to satisfy the first clause---namely, when the clause's
non-$\cst{p}$-literal is false.

If we replace $\cst{p}$ with this $\lambda$-abstraction in $N$ and
$\beta$-reduce, we obtain the satisfiable set
$N'' = \{(\iexists y.\> \cst{f}\>z \ieq \cst{f}\>y \iand \inot\>\cst{q}\>y)
  \cor \cst{q}\> z{,}\allowbreak\;
  \lnot\> (\iexists y.\> \cst{f}\>\cst{a} \ieq \cst{f}\>y \iand \inot\>\cst{q}\>y)\}$.
Nothing essential is lost by eliminating $\cst{p}$---if
we need $\cst{p}$, we can use the $\lambda$-abstraction. This idea is the key to
the proof of Theorem~\ref{thm:spe-equisat} below.
\end{exa}

\begin{exa}
\label{ex:spe-ya}
Consider the clause set $N = \{\lnot\> y\> \cst{a}{,}\; \cst{p}\>
\cst{a}\}$. It is easy to see that the set is unsatisfiable, by taking $y :=
\cst{p}$. SPE of $\cst{p}$ transforms $N$ into the set $N' = \{\lnot\> y\>
\cst{a}\}$. Although $y\> \cst{a}$ is unifiable with the literal $\cst{p}\>\cst{a}$,
the first clause is left unchanged by SPE.
Like $N$, $N'$ is unsatisfiable. We cannot witness unsatisfiability by taking $y
:= \cst{p}$.
We can take $y := \lambda x.\; x \eq \cst{a}$, simulating
$\cst{p}$ on the input $\cst{a}$.
Alternatively, we can take $y := \lambda x.\> \itrue$.
Indeed, we could have taken either instantiation for $y$ to show that
$N$ is unsatisfiable, without exploiting the presence of $\cst{p}\>\cst{a}$.
\end{exa}

Example~\ref{ex:spe-ya} corroborates our choice of ignoring literals headed by a
variable in the definition of SPE, focusing instead on $\cst{p}$-literals. The
intuition is that often $\cst{p}$ is unnecessary to have unsatisfiability, and
when it is necessary it can be simulated by a $\lambda$-abstraction that does
not contain it.

\begin{thm}
\label{thm:spe-equisat}
Let $N$ be a finite clause set that is polymorphism-safe for $\cst{p}$ and
$\cst{p}$ be a singular predicate symbol for $N$.
Let $N'$ be the result of applying SPE of $\cst{p}$ to $N$. Then $N'$ is
satisfiable if and only if $N$ is satisfiable.
\end{thm}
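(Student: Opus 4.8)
The plan is to prove the two directions separately. The forward direction (if $N'$ is satisfiable then $N$ is satisfiable) is the substantial one; the backward direction follows from soundness of flat resolution, as already noted in the text after Definition~\ref{def:pred-elim}. Concretely, for the backward direction: given a model $\III\models N$, I would observe that $\withoutpred{N}{\cst{p}}\subseteq N$ so those clauses are true in $\III$, and every clause in $\withpredpos{N}{\cst{p}}\flatres_{\cst{p}}\withpredneg{N}{\cst{p}}$ is a logical consequence of $N$ (each flat resolvent is entailed by its two premises: if all the disequations $s_i\noteq t_i$ are false in a valuation, then $\cst{p}\,\tuple{s}$ and $\cst{p}\,\tuple{t}$ denote the same thing, so the positive and negative $\cst{p}$-literals cannot both be satisfied, forcing $C'\cor D'$; the polymorphic $\mgu$ on type arguments only specializes). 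Hence $\III\models N'$. One should double-check that the iterated construction $\flatresiter^*$ preserves entailment at each step, which is immediate by induction since step~\ref{itm:flat-res-set-res} only adds consequences.

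For the forward direction, the strategy is the one foreshadowed in Example~\ref{ex:spe-ab}: given a model $\III=(\IIIty,\II,\LL)$ of $N'$, define an interpretation $\III^\ast$ of the full signature $\Sigma$ (including $\cst{p}$) that agrees with $\III$ on $\Sigma\setminus\{\cst{p}\}$ and interprets $\cst{p}$ by the denotation of the $\lambda$-abstraction
\[
\cst{p}\tyargs{\tuple{\alpha}} := \lambda\tuple{x}.\ \ibigvee_{(\cst{p}\tyargs{\tuple{\alpha}}\,\tuple{s}\,\cor\,C')\ \in\ \withpredpos{N}{\cst{p}}}\ \iexists\tuple{z}.\ \Bigl(\ibigwedge_i x_i \ieq s_i\Bigr)\iand \inot\,\iform{C'}
\]
where $\tuple{z}$ are the (term) variables of that clause and $\tyargs{\tuple{\alpha}}$ ranges over type instances; polymorphism-safety guarantees that all variables of each clause in $\withpredpos{N}{\cst{p}}$ are captured (the type variables appear in $\tuple{\alpha}$, the term variables in $\tuple{z}$), so this is well defined as a genuine $\lambda$-designation that does not mention $\cst{p}$. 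I would then need to check $\III^\ast\models N$ clause by clause. For a clause $C\in\withoutpred{N}{\cst{p}}$: $\cst{p}$ does not occur in $C$ at all, so its truth value is unchanged and $C$ is true because $\withoutpred{N}{\cst{p}}\subseteq N'$. For a clause in $\withpredpos{N}{\cst{p}}$: by construction $\cst{p}\tyargs{\tuple{\alpha}}\,\tuple{s}$ is made true precisely when $\inot\,\iform{C'}$ holds (taking the existential witnesses to be the values of $\tuple{z}$), so the clause $\cst{p}\tyargs{\tuple{\alpha}}\,\tuple{s}\cor C'$ is satisfied — here one uses that deep occurrences of $\cst{p}$ are impossible since $\cst{p}$ is singular for $N$, so the only $\cst{p}$-occurrence in the clause is the head literal and the replacement does not disturb $C'$ or the argument terms $\tuple{s}$. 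For a clause $\lnot\,\cst{p}\tyargs{\tuple{\upsilon}}\,\tuple{t}\cor D'\in\withpredneg{N}{\cst{p}}$: if $\lnot\,\cst{p}\tyargs{\tuple{\upsilon}}\,\tuple{t}$ is true under $\III^\ast$ we are done; otherwise $\cst{p}\tyargs{\tuple{\upsilon}}\,\tuple{t}$ is true, which by the definition of $\III^\ast(\cst{p})$ means some disjunct fires, i.e.\ there is a clause $\cst{p}\tyargs{\tuple{\alpha}}\,\tuple{s}\cor C'$ in $\withpredpos{N}{\cst{p}}$ and values for its variables with $t_i = s_i$ for all $i$ and $\iform{C'}$ false; but then the corresponding flat resolvent $(s_1\noteq t_1\cor\cdots\cor s_n\noteq t_n\cor C'\cor D')\sigma$ lies in $N'$, all its disequations and all of $C'$ are false under the matching valuation, so $D'$ must be true, hence so is the original clause.

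The main obstacle I anticipate is bookkeeping rather than conceptual: making the $\lambda$-designation for $\cst{p}$ genuinely well defined and $\cst{p}$-free across all type instances, handling the polymorphic type $\mgu$ $\sigma$ correctly when relating the ``fired disjunct'' to the flat resolvent that actually appears in $M\flatres_{\cst{p}}N$ (one must argue that the type arguments $\tuple{\tau}$ of the positive clause and $\tuple{\upsilon}$ of the negative clause, when they have a common instance witnessed semantically, are unifiable, so that the resolvent is defined and present — this is where polymorphism-safety and the distinct domain assumption pull their weight), and checking that the iterated resolved-set construction $\flatresiter^\ast$ really produces, up to renaming, every resolvent one needs. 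I would also need to confirm that $\III^\ast$ is proper, which is automatic since $\III$ is proper and $\LL$ extends uniquely by \cite[Proposition~2.18]{fitting-2002} once the designation on the new $\lambda$-abstractions is fixed consistently. None of these steps is deep, but getting the polymorphic instantiation square to commute is the delicate point.
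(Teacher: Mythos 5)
Your proposal is correct and follows essentially the same route as the paper's proof: the backward direction by soundness of flat resolution, and the forward direction by extending a model of $N'$ with an interpretation of $\cst{p}$ as the disjunction, over the clauses of $\withpredpos{N}{\cst{p}}$, of existentially quantified conjunctions $\ibigwedge_i x_i \ieq s_i \iand \inot\,\iform{C'}$, then verifying $\withoutpred{N}{\cst{p}}$, $\withpredpos{N}{\cst{p}}$, and $\withpredneg{N}{\cst{p}}$ separately, with the negative case discharged via the flat resolvents present in $N'$. The delicate points you flag (well-definedness across type instances via polymorphism-safety and the distinct domain assumption, and properness of the extended interpretation) are exactly the ones the paper handles.
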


\begin{proof}
The ``if'' direction follows immediately from the soundness of flat resolution.
For the other direction, our strategy is inspired by Khasidashvili and
Korovin \cite[Theorem~1]{kk-2016-pe-fol}.

Let $\III = ({\IIIty}, \II, \LL)$ be a model of $N' =
\withoutpred{N}{\cst{p}} \cup (\withpredpos{N}{\cst{p}} \flatres_{\!\cst{p}}
\withpredneg{N}{\cst{p}})$, with ${\IIIty} = (\UU, \IIty)$.
We assume without loss of generality that this model satisfies the distinct
domain assumption (Section~\ref{sec:clausal-higher-order-logic}).
We will define a new interpretation $\III'$ that extends
$\III$ with a semantics for $\cst{p}$
and show that $\III'$ is a model of $N = \smash{\withoutpred{N}{\cst{p}}
\cup \withpredpos{N}{\cst{p}} \cup \withpredneg{N}{\cst{p}}}$.
To achieve this, we will separately
show that (1)~$\III' \models \withoutpred{N}{\cst{p}}$;
(2)~$\III' \models \withpredpos{N}{\cst{p}}$; and
(3)~$\III' \models \withpredneg{N}{\cst{p}}$.

The interpretation $\III' = ({\IIIty}, \II', \LL')$ we construct is identical to
$\III$ except that $\II'$ is extended so that
$\II'(\cst{p}, \tuple{\DD})$ and $\LL'$ are defined as follows,
by mutual recursion.

We start with $\II'(\cst{p}, \tuple{\DD})$.
We construct a suitable interpretation for $\cst{p}$ as a curried $n$-ary
predicate. Let $\tuple{\tau'}$ be monomorphic types such that
$\semii{\tuple{\tau'}}{\IIIty} = \tuple{\DD}$. By the distinct domain
assumption, these types are unique if they exists. If no such types exists, let
$\II'(\cst{p}, \tuple{\DD})$ be the predicate $\semii{\lambda\tuple{x}.\>
\ifalse}{\III}$. This predicate is guaranteed to exist in the interpretation of
the type of $\cst{p}\tyargs{\tuple{\tau'}}$ thanks to the comprehension
principle (Section~\ref{sec:clausal-higher-order-logic}). The predicate plays
the role of a placeholder for impossible interpretations of $\cst{p}$; its value
is irrelevant.

In the case where the types $\tuple{\tau'}$ exist, we will construct a
right-hand side for $\II'(\cst{p}, \tuple{\DD})$ using the same idea as in
Example~\ref{ex:spe-ab}. To cope with polymorphism, we will filter out any
clauses whose $\cst{p}$'s type arguments cannot be instantiated to
$\tuple{\tau'}$ and instantiate the remaining clauses.

More precisely, let $M$ be the smallest set such that
for each clause $C = \cst{p}\tyargs{\tuple{\upsilon}}\>\tupleii{t}{n} \lor C'$
contained in $\withpred{N}{\cst{p}}$,
if there exists a substitution $\sigma$ such that
$\tuple{\upsilon}\sigma = \tuple{\tau'}$, then have $M$ contain $C\sigma$.
Notice that the polymorphism-safety hypothesis ensures that $C\sigma$
is monomorphic and uniquely specified. This is desirable because we want to
assign a unique right-hand side to $\II'(\cst{p}, \tuple{\DD})$.

We now define a term $u$ whose interpretation $\semii{u}{\III} =
\semii{u}{\III'}$ will give us the right-hand side.
Let $\tupleii{x}{n}$ be a tuple of fresh
variables. With each clause
$\cst{p}\tyargs{\tuple{\tau'}}\>\tupleii{t}{n} \lor C'$ contained in $M$
and whose free variables are $\tuple{y}$, associate the formula
\[\iexists\tuple{y}.\> x_1 \ieq t_1 \iand \cdots \iand x_n \ieq t_n \iand \inot\> \iform{C'}\]
(Recall from Section~\ref{sec:clausal-higher-order-logic} that $\iform{C'}$
denotes a formula representing the clause $C'$.)
Let $\varphi_1,\dots,\varphi_k$ be all such formulas,
and let $\varphi = \varphi_1 \ior \cdots \ior \varphi_k$. Then
we take $u = \lambda\tuple{x}.\>\varphi$.
This choice of $u$ will ensure that $\III'$ satisfies every clause in
$\withpredpos{N}{\cst{p}}$,
and thanks to the comprehension principle,
the predicate denoted by $u$ is guaranteed to exist in the interpretation
of $\smash{\cst{p}\tyargs{\tuple{\tau'}}}$'s type.

To finish the definition of $\III'$, we must specify $\LL'$.
Given a term $t$, let $t[u/\cst{p}]$ denote the variant of the term $t$
in which all occurrences of $\cst{p}$ are replaced by the term $u$ as defined
above (for suitable types $\tuple{\tau'}$). For all valuations
$\xi$ and $\lambda$-abstrac\-tions~$\lambda x \mathbin: \upsilon.\>t$, we define
$\LL'(\xi,\lambda x.\>t)$ as the function that maps each $v \in
\sem{\upsilon}{\IIIty}{\xi}$ to $\sem{t[u/\cst{p}]}{\III}{\xi[x\mapsto v]}$.
This function exists in the domain associated with the $\lambda$-abstraction's type
because $\III$ obeys the comprehension principle. Moreover, because
$t[u/\cst{p}]$ replaces $\cst{p}$ by a term with the same semantics
according to $\III'$, the interpretation $\III'$ is proper.

We are now ready to tackle the three conditions we need to prove.
To prove (1), we start from $\III \models \withoutpred{N}{\cst{p}}$ and show
$\III' \models \withoutpred{N}{\cst{p}}$. More precisely, we must show that
$\sem{\withoutpred{N}{\cst{p}}}{\III'}{\xi}$ is true for every valuation $\xi$.
This is obvious because $\III$ and $\III'$ only differ on $\cst{p}$,
which does not occur in $\withoutpred{N}{\cst{p}}$.

To prove (2), we show that $\III' \models \withpredpos{N}{\cst{p}}$ holds by
construction of $\III'$. More precisely, we must show that
$\sem{\withpredpos{N}{\cst{p}}}{\III'}{\xi}$ is true for every valuation $\xi$. Let
$C = \pospredlit{\cst{p}\tyargs{\tuple{\upsilon}}\>\tuple{t}} \cor C'$ be a clause in
$\withpredpos{N}{\cst{p}}$.
By definition,
\[\II'(\cst{p}, \sem{\tuple{\upsilon}}{\IIIty}{\xi})
= \sem{\lambda\tuple{x}.\>
\cdots \ior \underbrace{(\iexists\tuple{y}.\> x_1 \ieq t_1 \iand \cdots \iand x_n \ieq t_n \iand \inot\> \iform{C'})}_{\smash{\textstyle\psi}}
\ior\allowbreak \cdots}{\III}{\xi}\]
If $\sem{C'}{\III'}{\xi}$ is true, then clearly $\sem{C}{\III'}{\xi}$ is true.
Otherwise, the literal $\pospredlit{\cst{p}\tyargs{\tuple{\upsilon}}\>\tuple{t}}$
is true, because the disjunct $\psi$ is true. This can be seen by taking the
values of $\tuple{y}$ under $\xi$ as the existential witnesses. The arguments
$\tuple{t}$ in $\pospredlit{\cst{p}\tyargs{\tuple{\upsilon}}\>\tuple{t}}$ passed
for $\tuple{x}$ match those expected by the equalities $x_i \ieq t_i$,
and since $\sem{C'}{\III'}{\xi}$ is false, $\sem{\inot\> \iform{C'}}{\III'}{\xi}$
is true.

For the proof of (3),
the argument is similar to Khasidashvili and Korovin's
\cite[Theorem~1]{kk-2016-pe-fol}. It relies on the presence of the
flat resolvents
$\withpredpos{N}{\cst{p}} \flatres_{\!\cst{p}} \withpredneg{N}{\cst{p}}$
in the result set.
Let $D = \lnot\>\cst{p}\tyargs{\tuple{\upsilon}}\>\tuple{u} \lor D'$ be a clause
in $\withpredneg{N}{\cst{p}}$.
If $\sem{D'}{\III}{\xi}$ is true, then clearly $\sem{D'}{\III'}{\xi}$ is true.
Otherwise, $\sem{D'}{\III'}{\xi}$ is false, and
$\sem{\lnot\>\cst{p}\tyargs{\tuple{\upsilon}}\>\tuple{u}}{\III'}{\xi}$
is either true or false. In the true case, $\sem{D}{\III'}{\xi}$ is true, as desired.
As for the remaining case, it is impossible for the following reason.
If $\sem{\lnot\>\cst{p}\tyargs{\tuple{\upsilon}}\>\tuple{u}}{\III'}{\xi}$
were false, this would mean
$\sem{\cst{p}\tyargs{\tuple{\upsilon}}\>\tuple{u}}{\III'}{\xi}$
is true. By definition of $\III'$, this would then mean
$\semleft
\cdots \ior (\iexists\tuple{y}.\> u_1 \ieq t_1 \iand \cdots \iand u_n \ieq t_n \iand \inot\> \iform{C'})
\ior\allowbreak \cdots \semright{\III}{\xi}$
is true.

Suppose the displayed disjunct is one of those that makes the whole big
disjunction true. This means that there
exists a clause $C = \cst{p}\tyargs{\tuple{\tau}}\>\tuple{t} \lor C'$
in $\withpredpos{N}{\cst{p}}$ such that
$\sem{\tuple{t}}{\III'}{\xi} = \sem{\tuple{u}}{\III'}{\xi}$,
and $\sem{C'}{\III'}{\xi}$ is false,
where $\tuple{\tau}$ is unifiable with $\tuple{\upsilon}$. Let $\sigma$
be the most general unifier of $\tuple{\tau} \UNIF \tuple{\upsilon}$.
If $C$ exists, the flat resolvent $(t_1 \noteq u_1 \cor \cdots
\cor t_n \noteq u_n \cor C' \cor D')\sigma$
of $C$ and $D$ must be false in $\III'$ under $\xi$, and since it does not contain $\cst{p}$,
it would be false in $\III$ under $\xi$ as well, contradicting the hypothesis
that $\III$ satisfies it.
\end{proof}

As Khasidashvili and Korovin observed, eliminating all singular predicates
indiscriminately can dramatically increase the number of clauses in the problem.
To prevent this explosion, Vukmirovi\'c et al.\ proposed the following criterion.
Let $K_\mathrm{tol} \in \mathbb{N}$ be a tolerance
parameter. The application of SPE from
$N$ to $N'$ is allowed if
$\lambda(N') < \lambda(N) + K_\mathrm{tol}$ or
$\mu(N') < \mu(N)$ or
$|N'| < |N| + K_\mathrm{tol}$,
where
$\lambda(N)$ is the number of literals in~$N$ and
$\mu(N)$ is the sum, for all clauses $C \in
N$, of the square of the number of unique variables in $C$.

\oursubsection{Defined Predicate Elimination}

The next technique we generalize from first-order logic to higher-order logic is
called defined predicate elimination. It generalizes, in turn, the propositional
technique of elimination by substitution \cite{eb-2005-satpreprocess}.

Given a clause set $N$,
the basic idea is that the set $\withpred{N}{\cst{p}}$ of clauses containing
$\cst{p}$ is partitioned between a definition set (or ``gate'') $G$ and the
remaining clauses $R$. The definition set fully characterizes $\cst{p}$ for all
input in a unique way and can be seen as constituting a definition of the form
$\cst{p}\>\tuple{x} \iequiv \varphi$, where the variables
$\tuple{x}$ are distinct, $\cst{p}$ does not occur in
$\varphi$, and the variables in $\varphi$ are all among $\tuple{x}$. Because
of clausification, $G$ will usually consist of multiple clauses that together
are equivalent to $\cst{p}\>\tuple{x} \iequiv \varphi$ for some
$\varphi$.

We define definition sets largely as in monomorphic first-order logic, but with
additional requirements on the type arguments and type variables.

\begin{defi}
\label{def:definition-set}
\looseness=-1
Let $G$ be a clause set and $\cst{p}$ be a predicate symbol.
The set $G$ is a \emph{definition set} for $\cst{p}$ if
\begin{enumerate}
\item \label{itm:definition-set-singular}
  $\cst{p}$ is singular for $G$;
\item \label{itm:definition-set-form}
  $G$ consists of clauses of the form
  $\arbpredlit{\cst{p}\tyargs{\tuple{\alpha}}\>\tuple{x}} \cor C'$
  up to variable renaming,
  where $\tuple{\alpha}$ are distinct type variables
  and $\tuple{x}$ are distinct term variables;
\item \label{itm:definition-set-tvars}
  the type variables in $C'$ are all among $\tuple{\alpha}$;
\item \label{itm:definition-set-vars}
  the term variables in $C'$ are all among $\tuple{x}$;
\item \label{itm:definition-set-tauto}
  all clauses in $\withpredpos{G}{\cst{p}}
  \flatres_{\!\cst{p}} \withpredneg{G}{\cst{p}}$ are tautologies; and
\item \label{itm:definition-set-unsat}
  $E(\tuple\iota, \tuple{\cst{c}})$ is unsatisfiable,
  where the \emph{environment} $E(\tuple{\alpha}, \tuple{x})$ consists of all subclauses $C'$ of
  any $\arbpredlit{\cst{p}\tyargs{\tuple{\alpha}}\>\tuple{x}} \mathbin{\cor} C'
  \in G$,
  $\tuple{\iota}$ is a tuple of distinct nullary type constructors
  substituted in for $\tuple{\alpha}$,
  and $\tuple{\cst{c}}$ is a tuple of distinct fresh symbols
  substituted in for $\tuple{x}$.
\end{enumerate}
\end{defi}

Intuitively, conditions
\ref{itm:definition-set-singular}~and~\ref{itm:definition-set-form} check that
the definition set looks like the clausification of a definition; conditions
\ref{itm:definition-set-tvars}, \ref{itm:definition-set-vars}, and
\ref{itm:definition-set-tauto} check that the definition is not overconstrained
(e.g., $\cst{p}\>\cst{a}$ is not required to be both true and false); and
condition~\ref{itm:definition-set-unsat} checks that it is not underconstrained
(e.g., $\cst{p}\>\cst{a}$ is not unspecified).

\begin{defi}
\label{def:assoc-def}
Given a definition set $G$ for $\cst{p}$, its \emph{associated definition} is
the formula $\cst{p}\tyargs{\tuple{\alpha}}\>\tuple{x} \iequiv \varphi$, up to
variable renaming, where $\varphi$ is the disjunction $\varphi_1 \lor \cdots \lor
\varphi_n$ of all formulas $\varphi_j$ of the form $\inot\> \iform{C'}$ such that
$\pospredlit{\cst{p}\tyargs{\tuple{\alpha}}\>\tuple{x}}
\cor C'$ is contained in $G$ up to variable renaming.
\end{defi}

Note that our notion of definition set ensures that in Definition~\ref{def:assoc-def}
the type variables $\tuple{\alpha}$ are distinct, the term variables $\tuple{x}$
are distinct, $\cst{p}$ does not occur in $\varphi$, the type variables in
$\varphi$ are all among $\tuple{\alpha}$, and the variables in $\varphi$ are all
among $\tuple{x}$.

\begin{exa}
\label{ex:tautologies}
For the formula $\cst{p}\>x\> y \iequiv
\cst{q}\>x \cor \cst{r}\>y$, clausification would produce the clause set $\{
\negpredlit{\cst{p}\>x\> y} \cor \pospredlit{\cst{q}\>x} \cor \pospredlit{\cst{r}\>y}{,}\;
\pospredlit{\cst{p}\>x\> y} \cor \negpredlit{\cst{q}\>x}{,}\;
\pospredlit{\cst{p}\>x\> y} \cor \negpredlit{\cst{r}\>y}\}$,
which qualifies
as a definition set for $\cst{p}$. The associated definition is
$\cst{p}\>x\> y \iequiv \inot\,\inot\>\cst{q}\>x \ior \inot\,\inot\>\cst{r}\>y$.
\end{exa}

\begin{lem}
\label{lem:definition-set-equivalent-associated-definition}
Let $G$ be a definition set for $\cst{p}$. Then $G$ is equivalent to the
definition $\cst{p}\tyargs{\tuple{\alpha}}\>\tuple{x} \iequiv \varphi$
associated with $G$.
\end{lem}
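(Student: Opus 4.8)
The plan is to prove the two directions of the $\iequiv$ equivalence separately, where by "$G$ is equivalent to the definition" I mean that an interpretation $\III$ (together with a valuation) models $G$ if and only if it models the associated definition formula $\cst{p}\tyargs{\tuple{\alpha}}\>\tuple{x} \iequiv \varphi$ read as a closed clause (i.e.\ universally quantified over $\tuple{\alpha}$ and $\tuple{x}$). First I would set up notation: fix an interpretation $\III$, type-instantiate $\tuple{\alpha}$ to arbitrary domains $\tuple{\DD}$, and fix values $\tuple{v}$ for $\tuple{x}$; the whole statement then reduces to showing, for each such choice, that all clauses of $G$ are true iff $\semleft\cst{p}\tyargs{\tuple{\alpha}}\>\tuple{x}\semright{\III}{\xi}$ equals $\semleft\varphi\semright{\III}{\xi}$, where $\xi$ assigns $\tuple{\DD}$ to $\tuple{\alpha}$ and $\tuple{v}$ to $\tuple{x}$. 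Conditions \ref{itm:definition-set-form}--\ref{itm:definition-set-vars} of Definition~\ref{def:definition-set} guarantee that each clause of $G$ has exactly the shape $\arbpredlit{\cst{p}\tyargs{\tuple{\alpha}}\>\tuple{x}} \cor C'$ with no stray type or term variables, so after this instantiation each clause becomes either $\pospredlit{\cst{p}\>\tuple{x}} \cor C'$ or $\negpredlit{\cst{p}\>\tuple{x}} \cor C'$ with $C'$ a clause over $\tuple{x}$, and $\varphi$ is exactly the disjunction of the $\inot\>\iform{C'}$ coming from the positive clauses.

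For the forward direction, assume $\III,\xi \models G$. Split on the Boolean value $b = \semleft\cst{p}\>\tuple{x}\semright{\III}{\xi} \in \{0,1\}$. If $b = 1$, then every positive clause $\pospredlit{\cst{p}\>\tuple{x}} \cor C'$ is automatically satisfied, so these give no constraint; but every negative clause $\negpredlit{\cst{p}\>\tuple{x}} \cor C'$ forces $\semleft C'\semright{\III}{\xi}$ to be true, i.e.\ $\semleft\inot\>\iform{C'}\semright{\III}{\xi}$ false. The crucial input here is condition~\ref{itm:definition-set-tauto}: the flat resolvents $\withpredpos{G}{\cst{p}} \flatres_{\!\cst{p}} \withpredneg{G}{\cst{p}}$ being tautologies means that for every positive clause (contributing $\inot\>\iform{C'_+}$ to $\varphi$) and every negative clause (with body $C'_-$), the resolvent $(\tuple{x}\noteq\tuple{x} \cor C'_+ \cor C'_-)$, which after the trivial unifier is just $C'_+ \cor C'_-$, is a tautology — hence always true. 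Since all negative bodies $C'_-$ are false (as just argued), each positive body $C'_+$ must be true, so each disjunct $\inot\>\iform{C'_+}$ of $\varphi$ is false, giving $\semleft\varphi\semright{\III}{\xi} = 0 = \dots$ wait, that is backwards; let me state it correctly: each disjunct $\inot\>\iform{C'_+}$ is the negation of $C'_+$, and $C'_+$ true makes $\inot\>\iform{C'_+}$ false, so $\varphi$ evaluates to $0$, matching $b$? No --- I need to re-examine: if $b=1$ I want $\varphi$ true. So instead the relevant constraint when $b=1$ must come differently. I would therefore handle $b=1$ by using the \emph{negative} clauses only in the converse direction and realize the correct dependency is: when $b = 0$, every negative clause is satisfied trivially, and every positive clause $\pospredlit{\cst{p}\>\tuple{x}} \cor C'_+$ forces $C'_+$ true, i.e.\ every disjunct $\inot\>\iform{C'_+}$ of $\varphi$ false, so $\varphi$ evaluates to $0 = b$. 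When $b = 1$, the negative clauses force each negative body $C'_-$ true; I then invoke condition~\ref{itm:definition-set-unsat}, which says the environment $E(\tuple\iota,\tuple{\cst c})$ — all the bodies $C'$ of clauses in $G$, with $\tuple{\alpha},\tuple{x}$ replaced by fresh constructors/symbols — is unsatisfiable, hence at the semantic level, for our particular $\tuple{\DD},\tuple{v}$, not all bodies $C'$ can simultaneously be true; since all negative bodies are true, some positive body $C'_+$ is false, so its disjunct $\inot\>\iform{C'_+}$ is true, giving $\varphi$ true $= b$. (One must check the distinct-domain assumption and the freshness of $\tuple{\cst c},\tuple\iota$ let us transfer the syntactic unsatisfiability of $E(\tuple\iota,\tuple{\cst c})$ to "no common satisfying assignment of the bodies at $\tuple{\DD},\tuple{v}$"; this is the delicate point and the main obstacle.)

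For the converse, assume $\III,\xi \models \cst{p}\tyargs{\tuple{\alpha}}\>\tuple{x} \iequiv \varphi$ for all instantiations, and take any clause $\arbpredlit{\cst{p}\>\tuple{x}}\cor C'$ of $G$ (already instantiated). If it is positive, $\pospredlit{\cst{p}\>\tuple{x}}\cor C'_+$: when $\cst{p}\>\tuple{x}$ is true the clause holds; when it is false, $\varphi$ is false, so in particular its disjunct $\inot\>\iform{C'_+}$ is false, meaning $C'_+$ is true, so the clause holds. If it is negative, $\negpredlit{\cst{p}\>\tuple{x}}\cor C'_-$: when $\cst{p}\>\tuple{x}$ is false the clause holds; when it is true, $\varphi$ is true, so some disjunct $\inot\>\iform{C'_+}$ is true, i.e.\ some positive body $C'_+$ is false; now condition~\ref{itm:definition-set-tauto} kicks in — the resolvent $C'_+ \cor C'_-$ is a tautology hence true, and since $C'_+$ is false, $C'_-$ is true, so the clause holds. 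Thus $\III,\xi\models G$. I would wrap up by noting that all quantifications over $\tuple{\alpha}$ (domains $\tuple{\DD}$) and $\tuple{x}$ (values $\tuple{v}$) were arbitrary throughout, so model-equivalence holds. The step I expect to be the real obstacle is the semantic reading of condition~\ref{itm:definition-set-unsat}: translating "$E(\tuple\iota,\tuple{\cst c})$ is unsatisfiable as a clause set" into "for the fixed domains $\tuple{\DD}$ and fixed values $\tuple{v}$, the bodies $C'$ are not jointly satisfiable in $\III$." This needs the fresh symbols $\tuple{\cst c}$ to be able to denote the chosen $\tuple{v}$ in some expansion of $\III$, the fresh nullary constructors $\tuple\iota$ to denote the chosen $\tuple{\DD}$, and the comprehension/properness machinery to ensure such an expansion of $\III$ is again a legitimate interpretation; once that expansion exists, unsatisfiability of $E(\tuple\iota,\tuple{\cst c})$ directly contradicts "all bodies true," closing the argument.
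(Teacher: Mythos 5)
Your proposal is correct and follows essentially the same route as the paper's proof: the false/true case split on $\cst{p}\>\tuple{x}$ using condition~\ref{itm:definition-set-unsat} for the forward direction's true case, and condition~\ref{itm:definition-set-tauto} for the negative clauses in the converse direction (your mid-proof self-correction lands exactly on the paper's argument). The ``delicate point'' you flag --- transferring the syntactic unsatisfiability of $E(\tuple\iota,\tuple{\cst c})$ to the semantic claim at the fixed $\tuple{\DD},\tuple{v}$ by interpreting the fresh constructors and symbols accordingly --- is simply left implicit in the paper, and your sketch of how to discharge it is the right one.
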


\begin{proof}
We will show that under every valuation $\xi$, any model of $G$ is a model of
$\cst{p}\tyargs{\tuple{\alpha}}\>\tuple{x} \iequiv \varphi$ and vice versa.

Let $\III \models G$. We will show that
$\sem{\cst{p}\tyargs{\tuple{\alpha}}\>\tuple{x}}{\III}{\xi} =
\sem{\varphi}{\III}{\xi}$. If
$\sem{\cst{p}\tyargs{\tuple{\alpha}}\>\tuple{x}}{\III}{\xi}$ is false, then for
each clause $\pospredlit{\cst{p}\tyargs{\tuple{\alpha}}\>\tuple{x}} \cor C' \in
G$, we have that $\sem{C'}{\III}{\xi}$ must be true. This in turn makes the
right-hand side $\varphi$ false.
Otherwise, $\sem{\cst{p}\tyargs{\tuple{\alpha}}\>\tuple{x}}{\III}{\xi}$ is true.
Then for each clause
$\negpredlit{\cst{p}\tyargs{\tuple{\alpha}}\>\tuple{x}} \cor C' \in G$, we have
that $\sem{C'}{\III}{\xi}$ must be true. By
condition~\ref{itm:definition-set-unsat}, there must exist a clause
$\pospredlit{\cst{p}\tyargs{\tuple{\alpha}}\>\tuple{x}} \cor D' \in G$ such that
$\sem{D'}{\III}{\xi}$ is false. This means that $\sem{\inot\> D'}{\III}{\xi}$ is
true and hence the entire disjunction $\varphi$ is true, as desired.

Now let $\III \models \cst{p}\tyargs{\tuple{\alpha}}\>\tuple{x} \iequiv
\varphi$. We need to show that $\III \models G$. We will first prove the case of
clauses in $G$ where the $\cst{p}$-literal is positive; then we will consider
the negative case.
Let $C = \pospredlit{\cst{p}\tyargs{\tuple{\alpha}}\>\tuple{x}} \cor C' \in G$.
If $\pospredlit{\cst{p}\tyargs{\tuple{\alpha}}\>\tuple{x}}$ is true, then $C$
is true, as desired. Otherwise, from
$\III \models \cst{p}\tyargs{\tuple{\alpha}}\>\tuple{x} \iequiv
\varphi$ we have that $\varphi$ is false. This means that all of its disjuncts
are false and hence that $C'$ is true, meaning that $C$ is true.
For the remaining case, let $C =
\negpredlit{\cst{p}\tyargs{\tuple{\alpha}}\>\tuple{x}} \cor C' \in G$. If
$\pospredlit{\cst{p}\tyargs{\tuple{\alpha}}\>\tuple{x}}$ is false, then $C$ is
true, as desired. Otherwise, from
$\III \models \cst{p}\tyargs{\tuple{\alpha}}\>\tuple{x} \iequiv
\varphi$ we have that $\varphi$ is true. This means that there exists a
clause $D = \pospredlit{\cst{p}\tyargs{\tuple{\alpha}}\>\tuple{x}} \cor D' \in G$
such that $\sem{D'}{\III}{\xi}$ is false. By
condition~\ref{itm:definition-set-tauto}, the resolvent $C' \cor D'$ of $C$ and
$D$ must be a tautology. Hence $C'$ is true and thus $C$ is true, as desired.
\end{proof}

Once a definition is identified, it is expanded in the remaining clauses $R$.
For $\cst{p}$-literals in $R$, this is achieved as in first-order logic using
flat resolution. For deeper occurrences of $\cst{p}\tyargs{\tuple{\tau}}$ in
$R$, which may arise in higher-order logic, this is achieved by replacing them
by the $\lambda$-abstraction
$\lambda\tuple{x}.\> \varphi\{\tuple{\alpha}\mapsto\tuple{\tau}\}$. An
alternative would be to replace \emph{all} occurrences of
$\cst{p}\tyargs{\tuple{\tau}}$ and not only deep occurrences, but this would
leave more work for the clausifier.

\begin{defi}
\looseness=-1
Let $N$ be a clause set and $\cst{p}$ be a predicate symbol.
Let $G \subseteq N$ be a definition set
for $\cst{p}$ with associated definition
$\cst{p}\tyargs{\tuple{\alpha}}\>\tuple{x} \iequiv \varphi$.
Let $R = \withpred{N}{\cst{p}} \setminus G$.
\emph{Defined predicate elimination} (DPE) of $\cst{p}$ in $N$ replaces $N$ by
$\withoutpred{N}{\cst{p}} \cup
(G \flatres_{\!\cst{p}} R)
[\cst{p}\tyargs{\tuple{\tau}} \allowbreak\mapsto \lambda\tuple{x}.\> \varphi\{\tuple{\alpha}\mapsto\tuple{\tau}\}]$.
\end{defi}

The key result is that DPE preserves satisfiability and unsatisfiability. The
proof builds on three lemmas.

\begin{lem}
\label{lem:flat-res-set-step-satisfiability}
Let $G$ be a definition set for $\cst{p}$ and $R$ be an arbitrary clause
set. If $(G, R) \flatresiter (G, R')$, then $G \cup R$ and $G \cup R'$ are
equivalent.
\end{lem}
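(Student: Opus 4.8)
The plan is to reduce the equivalence to a local statement about a single clause and then read it off from the definition-set structure via Lemma~\ref{lem:definition-set-equivalent-associated-definition}. First I would observe that a step $(G, R) \flatresiter (G, R')$ can only be an instance of clause~\ref{itm:flat-res-set-res} of Definition~\ref{def:flat-res-set}, since clause~\ref{itm:flatresiter-exit} produces a bare clause set rather than a pair. Hence $R = \{K \cor C'\} \uplus R''$ for some designated $\cst{p}$-literal $K = \arbpredlit{\cst{p}\tyargs{\tuple{\tau}}\>\tuple{s}}$, and $R' = N' \cup R''$, where $N'$ consists of all flat resolvents, up to renaming, of $K \cor C'$ with the clauses of $G$ carrying the complementary $\cst{p}$-literal. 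Since $G \cup R$ and $G \cup R'$ share $G \cup R''$, it suffices to show: for every $\III \models G$ and every valuation $\xi$, the clause $K \cor C'$ is true under $\xi$ iff every clause of $N'$ is true under $\xi$.

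Next I would make $N'$ explicit using Definition~\ref{def:definition-set}. Up to renaming, every clause of $G$ has the form $\arbpredlit{\cst{p}\tyargs{\tuple{\alpha}}\>\tuple{x}} \cor E$ with $\tuple{\alpha}$ distinct type variables, $\tuple{x}$ distinct term variables, and---by conditions~\ref{itm:definition-set-tvars} and~\ref{itm:definition-set-vars}---every variable of $E$ among $\tuple{\alpha},\tuple{x}$. Write the positive clauses of $G$ as $\cst{p}\tyargs{\tuple{\alpha}}\>\tuple{x} \cor C'_i$ ($i \in I$) and the negative ones as $\lnot\>\cst{p}\tyargs{\tuple{\alpha}}\>\tuple{x} \cor D'_j$ ($j \in J$). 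If $K$ is positive, the most general unifier of the type arguments can be taken to be $\{\tuple{\alpha} \mapsto \tuple{\tau}\}$, so a flat resolvent with $\lnot\>\cst{p}\tyargs{\tuple{\alpha}}\>\tuple{x} \cor D'_j$ is $s_1 \noteq x_1 \cor \cdots \cor s_n \noteq x_n \cor C' \cor D'_j\{\tuple{\alpha}\mapsto\tuple{\tau}\}$; eliminating the fresh variables $x_k$ through the disequations shows this equivalent to $C' \cor D'_j\theta$, where $\theta = \{\tuple{\alpha}\mapsto\tuple{\tau},\, \tuple{x}\mapsto\tuple{s}\}$. Thus, modulo equivalence, $N' = \{C' \cor D'_j\theta : j \in J\}$; symmetrically, when $K$ is negative, $N' = \{C' \cor C'_i\theta : i \in I\}$.

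Then I would carry out the main case analysis, fixing $\III \models G$, a valuation $\xi$, and the valuation $\xi' = \xi[\tuple{\alpha}\mapsto\sem{\tuple{\tau}}{\IIIty}{\xi},\, \tuple{x}\mapsto\sem{\tuple{s}}{\III}{\xi}]$, so that by the substitution lemma $E\theta$ is true under $\xi$ iff $E$ is true under $\xi'$, and $\cst{p}\tyargs{\tuple{\tau}}\>\tuple{s}$ is true under $\xi$ iff $\cst{p}\tyargs{\tuple{\alpha}}\>\tuple{x}$ is true under $\xi'$. When $C'$ is true under $\xi$, both $K \cor C'$ and every clause of $N'$ are true; likewise when the designated atom has the truth value that makes $K$ true, since then each negative (resp.\ positive) clause of $G$, being satisfied by $\III$ at $\xi'$, forces the relevant $D'_j$ (resp.\ $C'_i$) true at $\xi'$, hence every resolvent true at $\xi$. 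The remaining case---say $K$ positive, $C'$ false, and the atom false under $\xi$, so $K \cor C'$ is false---is where I must exhibit a false resolvent, i.e.\ some $j$ with $D'_j$ false at $\xi'$. Here Lemma~\ref{lem:definition-set-equivalent-associated-definition} gives that $\III$ models the associated definition, so $\varphi$ is false at $\xi'$, which forces every $C'_i$ true at $\xi'$; then condition~\ref{itm:definition-set-unsat}---unsatisfiability of $E(\tuple{\iota},\tuple{\cst{c}})$, equivalently (the $\tuple{\iota},\tuple{\cst{c}}$ being fresh) the impossibility of all clauses of the environment $E(\tuple{\alpha},\tuple{x})$ being true under any single valuation---forces some $D'_j$ false at $\xi'$, as required. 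The symmetric case ($K$ negative, $C'$ false, atom true) is easier: $\varphi$ is then true at $\xi'$, which already makes some $C'_i$ false at $\xi'$. I expect this last point---recognising that one orientation genuinely needs the ``not underconstrained'' condition~\ref{itm:definition-set-unsat} whereas the other needs only the definition---to be the only real subtlety; the rest is bookkeeping with flat resolvents and the substitution lemma.
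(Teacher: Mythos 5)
Your proposal is correct. The paper itself gives no details here---it simply defers to the first-order analogue (Lemma~4.11 of the cited work)---and your argument is a sound, self-contained version of that proof adapted to the polymorphic higher-order setting: the reduction to a per-valuation biconditional, the elimination of the fresh variables $\tuple{x}$ through the disequations, and the case analysis using condition~\ref{itm:definition-set-unsat} for the positive-literal case and the associated definition for the negative one are all as they should be. Your observation that only one orientation genuinely needs the ``not underconstrained'' condition is accurate and is indeed the only non-routine point.
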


\begin{proof}
The proof is essentially as in the first-order case
\cite[Lemma~4.11]{vukmirovic-et-al-2023-sat}.
\end{proof}

\begin{lem}
\label{lem:flat-res-set-last-replace-satisfiability}
Let $G$ be a definition set for $\cst{p}$ with associated definition
$\cst{p}\tyargs{\tuple{\alpha}}\>\tuple{x} \iequiv \varphi$, and let $R$ be a
clause set.
Then $G \cup R$ and $G \cup R[\cst{p}\tyargs{\tuple{\tau}} \allowbreak\mapsto\allowbreak
\lambda\tuple{x}.\> \varphi\{\tuple{\alpha}\mapsto\tuple{\tau}\}]$ are
equivalent.
\end{lem}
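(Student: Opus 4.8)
The plan is to reduce the claim, via Lemma~\ref{lem:definition-set-equivalent-associated-definition}, to a statement about a single model of the associated definition, and then to show that the replacement $[\cst{p}\tyargs{\tuple{\tau}}\mapsto\lambda\tuple{x}.\>\varphi\{\tuple{\alpha}\mapsto\tuple{\tau}\}]$ preserves the denotation of every term occurring in $R$. By Lemma~\ref{lem:definition-set-equivalent-associated-definition}, the models of $G$ are exactly the interpretations satisfying $\cst{p}\tyargs{\tuple{\alpha}}\>\tuple{x}\iequiv\varphi$; moreover any model of $G\cup R$, and any model of $G\cup R[\cst{p}\tyargs{\tuple{\tau}}\mapsto\lambda\tuple{x}.\>\varphi\{\tuple{\alpha}\mapsto\tuple{\tau}\}]$, is in particular a model of $G$. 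Hence it suffices to fix an interpretation $\III=(\IIIty,\II,\LL)$ with $\III\models\cst{p}\tyargs{\tuple{\alpha}}\>\tuple{x}\iequiv\varphi$, and to prove that for every valuation $\xi$ and every clause $C\in R$ we have $\sem{C}{\III}{\xi}=\sem{C[\cst{p}\tyargs{\tuple{\tau}}\mapsto\lambda\tuple{x}.\>\varphi\{\tuple{\alpha}\mapsto\tuple{\tau}\}]}{\III}{\xi}$; the equivalence $\III\models G\cup R \iff \III\models G\cup R[\cdots]$, and thus the lemma, follows at once.

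The key step is the semantic identity $\sem{\cst{p}\tyargs{\tuple{\tau}}}{\III}{\xi}=\sem{\lambda\tuple{x}.\>\varphi\{\tuple{\alpha}\mapsto\tuple{\tau}\}}{\III}{\xi}$, valid in such an $\III$ for every valuation $\xi$ and every type tuple $\tuple{\tau}$. Since both sides are elements of a function domain, I would compare them argumentwise. For an arbitrary value tuple $\tuple{v}$, unfold the left-hand side to $\II(\cst{p},\sem{\tuple{\tau}}{\IIIty}{\xi})(\tuple{v})$; instantiating $\III\models\cst{p}\tyargs{\tuple{\alpha}}\>\tuple{x}\iequiv\varphi$ at the valuation $\xi'=\xi[\tuple{\alpha}\mapsto\sem{\tuple{\tau}}{\IIIty}{\xi},\;\tuple{x}\mapsto\tuple{v}]$ shows this equals $\sem{\varphi}{\III}{\xi'}$. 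By the standard type-substitution lemma, $\sem{\varphi}{\III}{\xi'}=\sem{\varphi\{\tuple{\alpha}\mapsto\tuple{\tau}\}}{\III}{\xi[\tuple{x}\mapsto\tuple{v}]}$ (here the fact that the type variables of $\varphi$ lie among $\tuple{\alpha}$, by Definitions~\ref{def:definition-set} and~\ref{def:assoc-def}, keeps the substitution clean), and by properness of $\III$, applied for each component of $\tuple{x}$, the latter equals $\sem{\lambda\tuple{x}.\>\varphi\{\tuple{\alpha}\mapsto\tuple{\tau}\}}{\III}{\xi}(\tuple{v})$. As $\tuple{v}$ was arbitrary, the two functions coincide.

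Given this identity, the replacement preserves denotations: a routine induction on the structure of a term $t$ shows $\sem{t}{\III}{\xi}=\sem{t[\cst{p}\tyargs{\tuple{\tau}}\mapsto\lambda\tuple{x}.\>\varphi\{\tuple{\alpha}\mapsto\tuple{\tau}\}]}{\III}{\xi}$ for all $\xi$, the only nontrivial cases being an occurrence $\cst{p}\tyargs{\tuple{\tau}}$ itself (handled by the identity above, at the appropriately extended valuation) and a $\lambda$-abstraction (handled by properness, with no variable capture since $\lambda\tuple{x}.\>\varphi\{\tuple{\alpha}\mapsto\tuple{\tau}\}$ has no free term variables, its term variables being confined to the bound $\tuple{x}$). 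Applying this to every term appearing in every literal of every $C\in R$ — noting that rewriting the head of a $\cst{p}$-literal merely replaces $\cst{p}\tyargs{\tuple{\tau}}\>\tuple{s}$ by the $\beta$-equal term $(\lambda\tuple{x}.\>\varphi\{\tuple{\alpha}\mapsto\tuple{\tau}\})\,\tuple{s}$, so the result is still a well-formed clause — yields $\sem{C}{\III}{\xi}=\sem{C[\cdots]}{\III}{\xi}$, hence $\III\models R\iff\III\models R[\cdots]$, which finishes the proof. The main obstacle is not conceptual but bookkeeping: chaining the definitional equivalence, the type-substitution lemma, and properness to obtain the argumentwise identity, while keeping the polymorphic details straight (simultaneous type substitution, the confinement of the type variables of $\varphi$ to $\tuple{\alpha}$, and the absence of term-variable capture).
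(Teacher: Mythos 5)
Your proposal is correct and follows essentially the same route as the paper's proof: invoke Lemma~\ref{lem:definition-set-equivalent-associated-definition} to identify the models of $G$ with those of the associated definition, lift the pointwise equivalence to an identity of denotations between $\cst{p}\tyargs{\tuple{\tau}}$ and $\lambda\tuple{x}.\>\varphi\{\tuple{\alpha}\mapsto\tuple{\tau}\}$ (the paper phrases your argumentwise comparison as an appeal to functional extensionality), and conclude that the replacement preserves truth in any model of $G$. You merely spell out the type-substitution bookkeeping and the induction on term structure that the paper leaves implicit.
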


\begin{proof}
By Lemma~\ref{lem:definition-set-equivalent-associated-definition},
$G$ entails the characterization $\cst{p}\tyargs{\tuple{\alpha}}\>\tuple{x}
\iequiv \varphi$.
Hence, by functional extensionality,
$G$ entails $\cst{p}\tyargs{\tuple{\alpha}} \ieq
\lambda\tuple{x}.\> \varphi$.
Thus, in any model of $G$,
$\cst{p}\tyargs{\tuple{\alpha}}$ has the same interpretation as
$\lambda\tuple{x}.\> \varphi$.
In particular, this applies to their instances:
$\cst{p}\tyargs{\tuple{\tau}}$ has the same interpretation as
$\lambda\tuple{x}.\> \varphi\{\tuple{\alpha}\mapsto\tuple{\tau}\}$.
\end{proof}

\begin{lem}
\label{lem:flat-res-set-last-step-satisfiability}
Let $G$ be a definition set for $\cst{p}$ and $R$ be a clause set with no
occurrences of $\cst{p}$. Then $G \cup R$ is satisfiable
if and only if $R$ is satisfiable.
\end{lem}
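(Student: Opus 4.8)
The ``only if'' direction is immediate: any model of $G \cup R$ is a fortiori a model of $R$, since $R \subseteq G \cup R$. For the ``if'' direction, the plan is to take a model $\III$ of $R$ and extend it to a model $\III'$ of $G \cup R$ by supplying a semantics for $\cst{p}$. This is possible because $\cst{p}$ does not occur in $R$, and because by Lemma~\ref{lem:definition-set-equivalent-associated-definition} the clauses of $G$ are together equivalent to the single definition $\cst{p}\tyargs{\tuple{\alpha}}\>\tuple{x} \iequiv \varphi$, in which $\cst{p}$ does not occur; so the interpretation of $\cst{p}$ can be read off from $\varphi$ without any circularity. In short, the backward direction is an instance of the principle that a definition can be added conservatively when the defined symbol is fresh.

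Concretely, I would follow the same recipe as in the proof of Theorem~\ref{thm:spe-equisat}. Assume without loss of generality that $\III = (\IIIty, \II, \LL)$ satisfies the distinct domain assumption. Define $\III' = (\IIIty, \II', \LL')$ agreeing with $\III$ everywhere except on $\cst{p}$: for a domain tuple $\tuple{\DD}$, if there are (necessarily unique, by the distinct domain assumption) monomorphic types $\tuple{\tau'}$ with $\semii{\tuple{\tau'}}{\IIIty} = \tuple{\DD}$, set $\II'(\cst{p}, \tuple{\DD}) = \semii{\lambda\tuple{x}.\> \varphi\{\tuple{\alpha}\mapsto\tuple{\tau'}\}}{\III}$, which is well-defined because $\cst{p}$ does not occur in $\varphi$, and which lives in the appropriate domain by comprehension; otherwise let $\II'(\cst{p}, \tuple{\DD})$ be the placeholder $\semii{\lambda\tuple{x}.\>\ifalse}{\III}$, whose value is irrelevant. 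To specify $\LL'$, I would reuse the replacement trick from the proof of Theorem~\ref{thm:spe-equisat}: writing $t[u/\cst{p}]$ for the term obtained from $t$ by replacing each $\cst{p}\tyargs{\tuple{\tau'}}$ with $\lambda\tuple{x}.\> \varphi\{\tuple{\alpha}\mapsto\tuple{\tau'}\}$, set $\LL'(\xi, \lambda x.\>t)$ to be the function mapping $v$ to $\sem{t[u/\cst{p}]}{\III}{\xi[x\mapsto v]}$. This lies in the right domain because $\III$ obeys comprehension, and it makes $\III'$ proper because $\cst{p}$ is replaced by a term with the same $\III'$-semantics.

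It then remains to check $\III' \models G \cup R$. Since $\III$ and $\III'$ differ only on $\cst{p}$, which does not occur in $R$, we have $\III' \models R$. For $G$: by construction, under every valuation $\xi$, applying $\II'(\cst{p},\cdot)$ to the values of $\tuple{x}$ under $\xi$ yields $\sem{\cst{p}\tyargs{\tuple{\alpha}}\>\tuple{x}}{\III'}{\xi} = \sem{\varphi}{\III'}{\xi}$, where the last step uses $\sem{\varphi}{\III}{\xi} = \sem{\varphi}{\III'}{\xi}$ (again because $\cst{p}$ does not occur in $\varphi$). Hence $\III'$ is a model of the associated definition $\cst{p}\tyargs{\tuple{\alpha}}\>\tuple{x} \iequiv \varphi$, and therefore $\III' \models G$ by Lemma~\ref{lem:definition-set-equivalent-associated-definition}.

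The main obstacle is not conceptual but bureaucratic: making the interpretation of the polymorphic symbol $\cst{p}$ well-defined uniformly across all of its type instances (which is precisely what the distinct domain assumption buys us) and discharging properness of $\III'$ through the $\lambda$-designation replacement argument. Both of these have already been carried out in detail in the proof of Theorem~\ref{thm:spe-equisat}, so here one can largely point to, or simply replay, that bookkeeping.
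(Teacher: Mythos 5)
Your overall strategy matches the paper's intent: the paper proves this lemma by a one-line deferral to the first-order analogue, and the argument being deferred to is exactly the conservativity-of-definitions argument you give---extend a model of $R$ with an interpretation of $\cst{p}$ read off from $\varphi$, then invoke Lemma~\ref{lem:definition-set-equivalent-associated-definition} to conclude $\III' \models G$. Your observation that $\cst{p}$ cannot occur in $\varphi$ (by singularity of $\cst{p}$ for $G$), so the definition is non-circular, and your handling of $\LL'$ via the replacement trick are both sound.

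There is, however, one step that fails as written: the placeholder case. The clauses of $G$ are implicitly universally quantified over the type variables $\tuple{\alpha}$, so $\III' \models G$ requires the associated definition to hold under \emph{every} type valuation, including those sending $\tuple{\alpha}$ to domain tuples $\tuple{\DD}$ that are not the denotation of any monomorphic types (nothing in the definition of a type interpretation forces every domain in $\UU$ to be so denoted). For such a $\tuple{\DD}$ you set $\II'(\cst{p},\tuple{\DD})$ to the uniformly false predicate and call its value irrelevant, but here it is not: by condition~\ref{itm:definition-set-unsat} of Definition~\ref{def:definition-set}, under any valuation some subclause $C'$ of $G$ is false, and if that $C'$ belongs to a positive clause $\pospredlit{\cst{p}\tyargs{\tuple{\alpha}}\>\tuple{x}} \cor C'$, that clause forces $\cst{p}$ to be true at the corresponding arguments, contradicting the placeholder. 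The repair is easy and actually shortens your proof: conditions \ref{itm:definition-set-tvars} and~\ref{itm:definition-set-vars} guarantee that all type variables of $\varphi$ are among $\tuple{\alpha}$ and all term variables among $\tuple{x}$, so you may define $\II'(\cst{p},\tuple{\DD}) = \sem{\lambda\tuple{x}.\>\varphi}{\III}{\xi}$ for an arbitrary valuation $\xi$ with $\xi(\tuple{\alpha}) = \tuple{\DD}$; this is independent of the choice of $\xi$, covers all domain tuples uniformly, and dispenses with both the placeholder and the distinct domain assumption. (The same placeholder device appears in the paper's proof of Theorem~\ref{thm:spe-equisat}, so you are following an established pattern, but for a definition set---whose type arguments are genuine variables---the placeholder branch cannot be avoided by polymorphism-safety and must be eliminated as above.)
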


\begin{proof}
The proof is essentially as in the first-order case
\cite[Lemma~4.12]{vukmirovic-et-al-2023-sat}.
\end{proof}

\begin{thm}
\label{thm:dpe-equisat}
The result of applying DPE to a finite clause set $N$ is
satisfiable if and only if $N$ is satisfiable.
\end{thm}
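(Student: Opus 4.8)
The plan is to present DPE as a chain of rewrites that preserve the class of models — except for a final step that merely preserves satisfiability — and to assemble it from the three preceding lemmas. Fix the eliminated predicate $\cst{p}$, the definition set $G \subseteq N$, the set $R = \withpred{N}{\cst{p}} \setminus G$, and the definition $\cst{p}\tyargs{\tuple{\alpha}}\>\tuple{x} \iequiv \varphi$ associated with $G$; abbreviate $R^\star = G \flatres_{\!\cst{p}} R$ and let $\rho$ denote the replacement $[\cst{p}\tyargs{\tuple{\tau}} \mapsto \lambda\tuple{x}.\> \varphi\{\tuple{\alpha}\mapsto\tuple{\tau}\}]$, so that the result of DPE is $\withoutpred{N}{\cst{p}} \cup R^\star\rho$. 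The first step is a bookkeeping observation: every clause of $G$ carries a $\cst{p}$-literal (Definition~\ref{def:definition-set}, condition~\ref{itm:definition-set-form}), so $G \subseteq \withpred{N}{\cst{p}}$; hence $G \cup R = \withpred{N}{\cst{p}}$ and $N = \withoutpred{N}{\cst{p}} \cup G \cup R$.

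Next I would unfold $\flatres$. Since $N$ — and hence $G$ and $R$ — is finite, Definition~\ref{def:flat-res-set} yields a finite chain $(G, R) = (G, R_0) \flatresiter \cdots \flatresiter (G, R_m) \flatresiter R_m = R^\star$ whose first component stays fixed at $G$, whose non-exit steps add only flat resolvents, and whose terminal exit step (enabled because $R^\star$ has no $\cst{p}$-literals) leaves the clause set unchanged. Applying Lemma~\ref{lem:flat-res-set-step-satisfiability} to each non-exit step and composing, $G \cup R$ is equivalent to $G \cup R^\star$, so $N$ is equivalent to $\withoutpred{N}{\cst{p}} \cup G \cup R^\star$. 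Then Lemma~\ref{lem:flat-res-set-last-replace-satisfiability}, applied with $R^\star$ in the role of ``$R$'', gives that $G \cup R^\star$ is equivalent to $G \cup R^\star\rho$, whence $N$ is equivalent to $\withoutpred{N}{\cst{p}} \cup G \cup R^\star\rho$.

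Finally I would discard $G$. To do so I would first check that $\withoutpred{N}{\cst{p}} \cup R^\star\rho$ has no occurrence of $\cst{p}$: in $R^\star$ there is no $\cst{p}$-literal, so every occurrence of $\cst{p}$ is as a head $\cst{p}\tyargs{\tuple{\tau}}$ of an applied subterm and is rewritten by $\rho$, and $\cst{p}$ does not occur in $\withoutpred{N}{\cst{p}}$ — it has no $\cst{p}$-literal and, for the result to indeed be $\cst{p}$-free, no deep occurrence of $\cst{p}$ either. Lemma~\ref{lem:flat-res-set-last-step-satisfiability}, applied with $\withoutpred{N}{\cst{p}} \cup R^\star\rho$ in the role of ``$R$'', then shows that $G \cup \withoutpred{N}{\cst{p}} \cup R^\star\rho$ is satisfiable if and only if $\withoutpred{N}{\cst{p}} \cup R^\star\rho$ is; chaining with the equivalences of the previous paragraph yields that $N$ is satisfiable if and only if the DPE result is. The substantive content already sits in the three lemmas (two of them imported from the first-order development) and in Lemma~\ref{lem:definition-set-equivalent-associated-definition} beneath them, so the assembly is routine; the one point that needs attention is exactly this last $\cst{p}$-freeness check — in particular, verifying that $\rho$ clears every remaining occurrence of $\cst{p}$ from $R^\star$ across all the type instances $\tuple{\tau}$ that appear, which is what makes Lemma~\ref{lem:flat-res-set-last-step-satisfiability} applicable.
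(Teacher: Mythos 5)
Your proof is correct and follows essentially the same route as the paper's: unfold the $\flatres_{\!\cst{p}}$ derivation, apply Lemma~\ref{lem:flat-res-set-step-satisfiability} once per step, then Lemma~\ref{lem:flat-res-set-last-replace-satisfiability}, then Lemma~\ref{lem:flat-res-set-last-step-satisfiability}. You are in fact somewhat more careful than the paper's version, which silently drops $\withoutpred{N}{\cst{p}}$ from the chain of equivalences and does not discuss the $\cst{p}$-freeness precondition of the final lemma that you rightly single out.
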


\begin{proof}
Let $\cst{p}$ be a predicate symbol and $G \subseteq N$ be the
definition set used by DPE. Let $R = \withpred{N}{\cst{p}} \setminus G$.
The core of DPE is the computation of $G \flatres_{\!\cst{p}} R$, via a
derivation $(G,R) \flatresiter^n (G,R') \flatresiter R'$.
Applying Lemma~\ref{lem:flat-res-set-step-satisfiability} $n$ times,
we get that $G \cup R$ is equivalent to $G \cup R'$.
Moreover, by
Lemma \ref{lem:flat-res-set-last-replace-satisfiability},
$G \cup R'$ is equivalent to $G \cup R'[\cst{p}\tyargs{\tuple{\tau}} \mapsto
\lambda\tuple{x}.\> \varphi\{\tuple{\alpha}\mapsto\tuple{\tau}\}]$.
Finally, by
Lemma \ref{lem:flat-res-set-last-step-satisfiability},
$G \cup R'[\cst{p}\tyargs{\tuple{\tau}} \mapsto \lambda\tuple{x}.\>
\varphi\{\tuple{\alpha}\mapsto\tuple{\tau}\}]$ is equivalent to
$R'[\cst{p}\tyargs{\tuple{\tau}} \mapsto \lambda\tuple{x}.\>
\varphi\{\tuple{\alpha}\mapsto\tuple{\tau}\}]$.
\end{proof}

\oursubsection{Portfolio Predicate Elimination}

A reasonable strategy for applying predicate elimination is to use a portfolio
of DPE and SPE, first trying to apply DPE and, if this fails, trying SPE as a
fallback.

\begin{defi}
Let $N$ be a clause set and $\cst{p}$ be a predicate symbol. If there exists a
definition set $G \subseteq N$ for $\cst{p}$, \emph{portfolio predicate
elimination} (PPE) on $\cst{p}$ applies DPE on $\cst{p}$.
Otherwise, if $\cst{p}$ is singular in $N$, PPE applies SPE on $\cst{p}$.
In all other cases, PPE is not applicable.
\end{defi}

Like SPE and DPE (Theorems \ref{thm:spe-equisat}~and~\ref{thm:dpe-equisat}), PPE
can be used as a preprocessor without affecting satisfiability. As for
inprocessing, Vukmirovi\'c et al.\ \cite{vukmirovic-et-al-2023-sat} explained
that under a reasonable condition, the first-order version of PPE can be used at
any point during proof search in a superposition prover without compromising
refutational completeness. Inspection of the proofs reveals that the same
applies to higher-order PPE and $\lambda$-superposition.

\section{Blocked Clause Elimination}
\label{sec:blocked-clause-elimination}

In propositional logic, a powerful technique for simplifying a clause set is to
identify and remove so-called blocked clauses. These are clauses whose
resolvents with other clauses in the set are all tautologies. Removing such
clauses preserves unsatisfiability.
Blocked clause elimination has been extended to first-order logic with equality
by Kiesl et al.\ \cite{ksstb-2017-blockedfol}. They call their key notion
``equality-blocked clauses,'' but since we consider only a logic with equality,
we simply call these clauses ``blocked.''

Blocked clause elimination has been shown by Vukmirovi\'c et al.\
\cite[Section~5]{vukmirovic-et-al-2023-sat} to be incompatible with the
saturation loop of a superposition prover. Nevertheless, the technique can still
be used effectively as a preprocessor, or even as an inprocessing technique
within the prover's saturation loop at the cost of potential divergence on some
unsatisfiable problems.

Our extension to polymorphic higher-order logic is based on a slightly weaker
definition of blocked clause than Kiesl et al. We were unsuccessful at showing
that a generalization of blocked clause elimination based on their concept
preserves unsatisfiability with respect to general interpretations. The notion
we propose allows the generalization.

\begin{defi}
\label{def:flat-l-resolvent}
Let $C = L \cor C'$ and $D = L' \cor D'$ be clauses such that
\begin{enumerate}
\item\label{itm:flatres-atm-l}
  the atom of $L$ is $\cst{p}\tyargs{\tuple{\tau}}\>\tupleii{s}{n}$;
\item\label{itm:flatres-atm-l'}
  the atom of $L'$ is $\cst{p}\tyargs{\tuple{\upsilon}}\>\tupleii{t}{n}$;
\item\label{itm:flatres-opposite-pol}
  the literal $L'$ is of opposite polarity to $L$;
\item\label{itm:flatres-dist-vars}
  $C$ and $D$ have no (type or term) variables in common; and
\item\label{itm:mgu-tau-upsilon}
  $\sigma$ is the most general unifier of $\tuple{\tau} \UNIF \tuple{\upsilon}$.
\end{enumerate}
The clause
$\bigl(\bigl(\bigvee_{\smash{j=}1}^n
 \allowbreak\neqlit{s_j}{t_{j}}\bigr) \cor C' \cor D'\bigr)\sigma$
is a \emph{binary flat $L$-resolvent} of $C$~and~$D$.
\end{defi}

We already see a first key difference with Keisl et al.: They consider $n$-ary
flat resolvents, whereas we need to consider only binary resolvents, for reasons
illustrated below (Example~\ref{ex:two-ps-nonexample}). Another, more superficial
difference is that our definition is generalized to polymorphic higher-order
logic.

\begin{defi}
\label{def:blocked}
Let $L = (\lnot)\,\cst{p}\tyargs{\tuple{\tau}}\>\tuple{s}$ be a predicate
literal, $C = L \cor C'$ be a clause, and $N$ be a clause set. Let $N'$ consist
of all clauses from $N \setminus \{C\}$ with their type and term variables
renamed so that $N'$ shares no variables with $C$. The clause $C$ is
\emph{blocked} by $L$ in the set $N$ if the following conditions are met:
\begin{enumerate}
\item\label{itm:blocked-poly-safe}
  $C$ is polymorphism-safe for $L$;
\item\label{itm:blocked-no-deep-p}
  $N$ contains no deep occurrences of $\cst{p}$;
\item\label{itm:blocked-c'-no-same-pol}
  $C'$ contains no $\cst{p}$-literals with the same polarity as $L$; and
\item\label{itm:blocked-tauto}
  all binary flat $L$-resolvents between $C$ and clauses in $N'$ are tautologies.
\end{enumerate}
\end{defi}

We now see another key difference with Keisl et al.: They have no restriction
corresponding to condition~\ref{itm:blocked-c'-no-same-pol} of
Definition~\ref{def:blocked}. In this respect, our notion is less powerful
than theirs. (They also have no restriction corresponding to conditions
\ref{itm:blocked-poly-safe} and \ref{itm:blocked-no-deep-p}, but these
conditions are trivially satisfied in a monomorphic first-order setting.)
As a result, their notion and our notion of blocked clause are incomparable in
strength.

\begin{exa}
This example is based on Keisl et al.~\cite[Example~1]{ksstb-2017-blockedfol}.
Let $C = \lnot\>\cst{p} \cor\nobreak \cst{q}$, and take $N = \{C{,}\; \cst{p}
\cor\nobreak \lnot\>\cst{q}{,}\; \lnot\>\cst{q} \cor \cst{r}\}$ as the clause
set. The clause $C$ is blocked by $\lnot\>\cst{p}$ in $N$ according to
Definition~\ref{def:blocked} because the only resolvent of $C$ on
$\lnot\>\cst{p}$ is the tautology $\cst{q} \cor \lnot\>\cst{q}$ resulting from
resolution against $\cst{p} \cor \lnot\>\cst{q}$. The clause $C$ is also blocked
according to the definition in Kiesl et al.
\end{exa}

\begin{exa}
\label{ex:two-ps-nonexample}
The next example is also based on Keisl et
al.~\cite[Example~4]{ksstb-2017-blockedfol}:\
$C = \cst{p}\> x\> y \cor \cst{p}\> y\> x$,
$D = \lnot\> \cst{p}\> x\> y \cor \lnot\> \cst{p}\> y\> x$,
and $N = \{C, D\}$.
The set $N$ is unsatisfiable, because $C$ entails $\cst{p}\>x\>x$ and $D$
entails $\lnot\>\cst{p}\>x\>x$. On the other hand, $D$ alone is satisfiable.
Hence, removing $C$ from $N$ does \emph{not} preserve unsatisfiability, and
therefore $C$ should \emph{not} be considered blocked. With
Definition~\ref{def:blocked}, $C$ correctly cannot be blocked on a
$\cst{p}$-literal by condition~\ref{itm:blocked-c'-no-same-pol}, because of the
presence of another $\cst{p}$-literal in the clause. With Kiesl et al., this
condition is missing, but since they consider all $n$-ary
resolvents, the nontautological resolvent $\cst{p}\> x\> x$ is
computed. In both cases, $C$ is correctly considered not blocked.
\end{exa}

\begin{exa}
Let $C = \cst{p}\> \cst{a} \cor \cst{p}\> \cst{b} \cor \lnot\> \cst{q}$,
$D = \lnot\> \cst{p}\> x \cor \cst{q}$, and $N = \{C, D\}$. With
Definition~\ref{def:blocked}, condition~\ref{itm:blocked-c'-no-same-pol} prevents
$C$ from being considered blocked. In contrast, the clause is considered blocked
by Kiesl et al.
\end{exa}

We will now show that removing a blocked clause from a clause set preserves the
set's unsatisfiability. Our strategy is loosely inspired by Kiesl et
al.~\cite[Section~4]{ksstb-2017-blockedfol}.

\begin{defi}
\label{def:flipping}
Let $\III = (\IIIty, \II, \LL)$ be an interpretation
and $L \lor C'$ be a clause that is
polymorphism-safe for $L$ and where $L =
(\lnot)\,\cst{p}\tyargs{\tuple{\tau}}\>\tupleii{s}{n}$.
Let $\tuple{y}$ be the tuple of all free variables in $L \lor C'$ and
$\tupleii{x}{n}$ be a tuple of fresh variables.
The interpretation $\III^\star = (\IIIty, \II^\star, \LL^\star)$ obtained by
\emph{flipping} the truth value of $L$
in $L \lor C'$ is defined as follows by mutual recursion.
We let $\II^\star$ be the function defined as follows:
\[\II^\star(\cst{f}, \tuple{\DD}) = \begin{cases}
\sem{\lambda\tupleii{x}{n}.\; \varphi}{\III}{\xi}
  & \text{if $\cst{f} = \cst{p}$ and $\sem{\tuple{\tau}}{\IIIty}{\xi} = \tuple{\DD}$
    for some type valuation $\xi$} \\
\II(\cst{f}, \tuple{\DD})
  & \text{otherwise}
\end{cases}\]
where $\varphi$ is defined by
\[
\varphi =
\begin{cases}
  \cst{p}\tyargs{\tuple{\tau}}\>\tuple{x} \mathrel{\ior}
    (\iexists\tuple{y}.\> x_1 \ieq s_1 \iand \cdots \iand x_n \ieq s_n \iand \inot\>\iform{C'})
    & \text{if $L$ is positive} \\
  \cst{p}\tyargs{\tuple{\tau}}\>\tuple{x} \mathrel{\iand}
    (\iforall\tuple{y}.\> x_1 \ieq s_1 \iand \cdots \iand x_n \ieq s_n \iimplies \iform{C'})
    & \text{if $L$ is negative}
\end{cases}
\]
Moreover, for all valuations
$\xi$ and $\lambda$-abstractions~$\lambda x \mathbin: \upsilon.\>t$, we let
$\LL^\star(\xi,\lambda x.\>t)$ be the function that maps each $v \in
\sem{\upsilon}{\IIIty}{\xi}$ to $\sem{t}{\III^\star}{\xi[x\mapsto v]}$.
\end{defi}

This definition introduces a well-formed interpretation. Because $L \lor C'$ is
polymorphism-safe for $L$,
the right-hand side
$\sem{\lambda\tupleii{x}{n}.\; \varphi}{\III}{\xi}$ of
$\II^\star(\cst{f}, \tuple{\DD})$ is uniquely defined. Moreover, the
comprehension principle guarantees that the corresponding predicate exists in
the interpretation of $\cst{p}$'s type.
Similarly, the function that provides the interpretation for a
$\lambda$-abstraction $\lambda x.\>t$ exists in the domain associated with the
$\lambda$-abstraction's type. This is because the semantics in $\III^\star$ of any
occurrences of $\cst{p}$ in $t$ corresponds to the semantics in $\III$ of
$\lambda\tupleii{x}{n}.\; \varphi$, and $\III$ is a well-formed
interpretation.

The intuition behind $\III^\star$ is that whenever the clause $L \lor C'$ is
blocked and $\III \models N \setminus \{L \lor C'\}$, we have $\III^\star
\models N$. Since adding the blocked clause preserves satisfiability,
removing it preserves unsatisfiability.

\begin{exa}
\label{ex:flipping-twice}
We will try to justify the definition of $\varphi$ above with an example.
Consider the clause set $\{\cst{p}\>\cst{a}{,}\;\,
\lnot\> \cst{p}\>z \cor z \eq \cst{a} \cor z \eq \cst{b}\}$
and an interpretation $\III$ that maps $\cst{p}$ to the uniformly false
predicate. Clearly, $\III$ is not a model of the first clause, $\cst{p}\>\cst{a}$.
The interpretation $\III^\star$ obtained by
flipping the truth value of the literal $\cst{p}\>\cst{a}$
in the first clause
interprets $\cst{p}$ in the same way as $\III$ interprets
\[\lambda x.\> \cst{p}\>x \mathrel{\ior} (x \ieq \cst{a} \mathrel{\iand} \inot\> \ifalse)\]
In other words, $\III^\star$ makes $\cst{p}$ true for arguments interpreted as equal
to $\cst{a}$ and false otherwise. Intuitively, the interpretation $\III$ is
flipped to satisfy the clause $\cst{p}\>\cst{a}$.

Next, assume instead that $\III$ interprets $\cst{p}$ as the uniformly true
predicate, and consider the interpretation $\III^\star$
obtained by
flipping the truth value of $\lnot\> \cst{p}\>z$
in the second clause,
$\lnot\> \cst{p}\>z \cor z \eq \cst{a} \cor z \eq \cst{b}$.
Then $\III^\star$ interprets $\cst{p}$ in the same way as $\III$ interprets
\[\lambda x.\> \cst{p}\>x \mathrel{\iand}
  (\iforall z.\> x \ieq z \mathrel{\iimplies} z \ieq \cst{a}
  \mathrel{\ior} z \ieq \cst{b})\]
This means that $\III^\star$ makes $\cst{p}$ true for arguments interpreted as
equal to $\cst{a}$ or $\cst{b}$ and false otherwise. Intuitively, the interpretation $\III$
is flipped to satisfy the clause $\lnot\> \cst{p}\>z \cor z \eq \cst{a}
\cor z \eq \cst{b}$; whenever $z \noteq \cst{a}$ and $z \noteq \cst{b}$, we
have $\lnot\> \cst{p}\>z$.
\end{exa}

\begin{lem}
\label{lem:bce-clause-equisat-flipped-interp}
Let $N$ be a clause set and $C$ be a clause
contained in $N$ such that $C$ has no variables in common with $N \setminus
\{C\}$. Assume $C$ is blocked by $L$ in $N$.
Let $\III$ be an interpretation and
$\III^\star$ be the interpretation obtained by flipping the truth value of $L$.
For every $D \in N \setminus \{C\}$,
if $\III \models D$, then $\III^\star \models D$.
\end{lem}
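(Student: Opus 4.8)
The plan is to fix a clause $D \in N \setminus \{C\}$ and a valuation $\xi$ with $\III \models D$ at $\xi$, and to exhibit a literal of $D$ that is true in $\III^\star$ at $\xi$. Since everything hinges on how $\III^\star$ differs from $\III$, I would first record two basic facts about the construction in Definition~\ref{def:flipping}. The first is a \emph{localization lemma}: if a term $t$ contains no occurrence of the symbol $\cst{p}$, then $\sem{t}{\III}{\xi} = \sem{t}{\III^\star}{\xi}$ for every valuation $\xi$. This is a routine structural induction on a representative of $t$ in $\eta$-short $\beta$-normal form; the symbol case uses that $\II$ and $\II^\star$ agree off $\cst{p}$ and that $\IIIty$ is shared, and the $\lambda$-abstraction case uses that both $\III$ and $\III^\star$ are proper, so $\sem{\lambda x.\,t}{\III}{\xi}$ and $\sem{\lambda x.\,t}{\III^\star}{\xi}$ are the pointwise functions $v \mapsto \sem{t}{\III}{\xi[x\mapsto v]}$ and $v \mapsto \sem{t}{\III^\star}{\xi[x\mapsto v]}$, equal by the inductive hypothesis. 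Since condition~\ref{itm:blocked-no-deep-p} of Definition~\ref{def:blocked} forbids deep occurrences of $\cst{p}$ in $N$, and $C, D \in N$, every literal of $D$ that is not a $\cst{p}$-literal is $\cst{p}$-free, and the arguments of any $\cst{p}$-literal $(\lnot)\,\cst{p}\tyargs{\tuple{\upsilon}}\>\tuple{t}$ of $D$ — likewise the arguments $\tuple{s}$ of $L$ — are $\cst{p}$-free. The second fact is a \emph{monotonicity} statement: $\II^\star(\cst{p},\tuple{\DD}) = \II(\cst{p},\tuple{\DD})$ unless $\tuple{\DD} = \sem{\tuple{\tau}}{\IIIty}{\zeta}$ for some type valuation $\zeta$ (recall $L = (\lnot)\,\cst{p}\tyargs{\tuple{\tau}}\>\tuple{s}$), and in that remaining case $\II^\star(\cst{p},\tuple{\DD})(\tuple{v}) \geq \II(\cst{p},\tuple{\DD})(\tuple{v})$ if $L$ is positive and $\leq$ if $L$ is negative, for all argument tuples $\tuple{v}$; this is immediate from the shape of $\varphi$, whose first disjunct/conjunct is just $\cst{p}\tyargs{\tuple{\tau}}\>\tuple{x}$.

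With these in hand, pick a literal $M$ of $D$ true in $\III$ at $\xi$. If $M$ is not a $\cst{p}$-literal, it is $\cst{p}$-free, hence true in $\III^\star$ at $\xi$ by localization. Otherwise write $M = (\lnot)\,\cst{p}\tyargs{\tuple{\upsilon}}\>\tuple{t}$, put $\tuple{\DD} = \sem{\tuple{\upsilon}}{\IIIty}{\xi}$ and $\tuple{v} = \sem{\tuple{t}}{\III}{\xi} = \sem{\tuple{t}}{\III^\star}{\xi}$. If $M$ and $L$ have the same polarity, $M$ stays true in $\III^\star$ at $\xi$ by monotonicity. So the only remaining case is that $M$ is a $\cst{p}$-literal of opposite polarity to $L$; I treat $L$ positive, $M = \lnot\,\cst{p}\tyargs{\tuple{\upsilon}}\>\tuple{t}$, the negative-$L$ case being dual. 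Here $\II(\cst{p},\tuple{\DD})(\tuple{v}) = 0$, and we may assume $M$ is false in $\III^\star$ at $\xi$ (else we are done), i.e. $\II^\star(\cst{p},\tuple{\DD})(\tuple{v}) = 1$. By the monotonicity fact this forces $\tuple{\DD} = \sem{\tuple{\tau}}{\IIIty}{\zeta}$ for some $\zeta$ and $\II^\star(\cst{p},\tuple{\DD}) = \sem{\lambda\tuple{x}.\,\varphi}{\III}{\zeta}$; unfolding $\sem{\lambda\tuple{x}.\,\varphi}{\III}{\zeta}(\tuple{v})$ with $\varphi = \cst{p}\tyargs{\tuple{\tau}}\>\tuple{x} \ior (\iexists\tuple{y}.\, x_1 \ieq s_1 \iand \cdots \iand x_n \ieq s_n \iand \inot\,\iform{C'})$, and using that the first disjunct evaluates to $\II(\cst{p},\tuple{\DD})(\tuple{v}) = 0$, yields witnesses $\tuple{d}$ for the free variables $\tuple{y}$ of $C$ such that, writing $\eta = \zeta[\tuple{x}\mapsto\tuple{v}][\tuple{y}\mapsto\tuple{d}]$, we have $\sem{s_i}{\III}{\eta} = v_i$ for each $i$ and $\III,\eta \not\models C'$. (A short separate argument, using nonemptiness of domains, shows that $M$ has exactly $n$ arguments, so the flat $L$-resolvent below is well formed.)

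Now form the binary flat $L$-resolvent $R = \bigl(\bigl(\bigvee_{j=1}^n \neqlit{s_j}{t_j}\bigr) \cor C' \cor D'\bigr)\sigma$ of $C$ and $D$ on $L$ and $M$, where $D' = D \setminus \{M\}$ and $\sigma = \mgu(\tuple{\tau} \UNIF \tuple{\upsilon})$, and consider the valuation $\chi$ that coincides with $\eta$ on the variables of $C$ and with $\xi$ on the variables of $D$ (well defined, since $C$ and $D$ share no variables). In $\III^\star$ under $\chi$: each disequation $\neqlit{s_j}{t_j}$ is false, because $s_j$ and $t_j$ are $\cst{p}$-free and $\sem{s_j}{\III^\star}{\chi} = \sem{s_j}{\III}{\eta} = v_j = \sem{t_j}{\III}{\xi} = \sem{t_j}{\III^\star}{\chi}$; and $C'$ is false, because $\III,\eta \not\models C'$ and, by condition~\ref{itm:blocked-c'-no-same-pol}, every $\cst{p}$-literal of $C'$ has polarity opposite to $L$, so each false literal of $C'$ in $\III$ at $\chi$ stays false in $\III^\star$ at $\chi$ (negative $\cst{p}$-literals by monotonicity since $L$ is positive, others by localization). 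Since $R$ is a tautology by condition~\ref{itm:blocked-tauto}, it holds in $\III^\star$ under the valuation obtained from $\chi$ via $\sigma$, so $\III^\star,\chi \models \bigl(\bigvee_j \neqlit{s_j}{t_j}\bigr) \cor C' \cor D'$; as the disequations and $C'$ are false, $D'$ is true in $\III^\star$ at $\chi$, hence (its variables lying in $D$, where $\chi$ agrees with $\xi$) true in $\III^\star$ at $\xi$. Since $D' \subseteq D$, this gives $\III^\star \models D$ at $\xi$, completing the argument.

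The step I expect to be the main obstacle is the polymorphic bookkeeping in the last paragraph: one must check that $\tuple{\tau}$ and $\tuple{\upsilon}$ are actually syntactically unifiable, so that $R$ exists and condition~\ref{itm:blocked-tauto} applies, and that the combined valuation $\chi$ factors through the most general unifier $\sigma$ of $\tuple{\tau} \UNIF \tuple{\upsilon}$ — neither of which is automatic, since semantic agreement of type denotations need not entail syntactic unifiability in an arbitrary type interpretation. The way around this is to invoke the distinct domain assumption, which we may assume holds of $\III$ for the intended application (where the point is that removing a blocked clause preserves unsatisfiability, so one is free to choose the model), exactly as in the proof of Theorem~\ref{thm:spe-equisat}; in the monomorphic first-order special case the issue evaporates entirely. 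A secondary point to get right is that $\III^\star$ may flip the truth value of several $\cst{p}$-literals of $D$ at once, but the argument above sidesteps any apparent circularity by pushing the falsity of $C'$ and of the disequations into $\III^\star$ and invoking the genuine, interpretation-independent tautologousness of $R$ there, rather than reasoning literal by literal.
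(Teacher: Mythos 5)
Your proof follows essentially the same route as the paper's: the only literals of $D$ that can go from true in $\III$ to false in $\III^\star$ are $\cst{p}$-literals of polarity opposite to $L$, and when one does flip, unfolding $\varphi$ yields a valuation on $C$'s variables under which the disequations and $C'$ of the tautological binary flat $L$-resolvent are false, forcing $D'$ (and hence $D$) to be true in $\III^\star$. Your explicit localization and monotonicity lemmas, and your observation that syntactic unifiability of $\tuple{\tau}$ and $\tuple{\upsilon}$ (needed for the resolvent to exist) requires something like the distinct domain assumption, make precise points that the paper's proof leaves implicit.
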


\begin{proof}
Let $C = L \lor C'$ where $L =
(\lnot)\,\cst{p}\tyargs{\tuple{\tau}}\>\tupleii{s}{n}$
(as per condition~\ref{itm:flatres-atm-l} of Definition~\ref{def:flat-l-resolvent}).
Assuming that $\III \models D$,
we will show that $\sem{D}{\III^\star}{\xi}$ is true for every valuation $\xi$.

Let $L'$ be a literal of $D$ of opposite polarity to $L$, whose atom is
$\cst{p}\tyargs{\tuple{\upsilon}}\>\tuple{t}$, and such that, for each $j$,
$\sem{L'}{\III}{\xi}$ is true and $\sem{L'}{\III^\star}{\xi}$ is false.
Intuitively, $L'$ is a literal whose truth value goes from true in $\III$ to
false in $\III^\star$. We distinguish two cases: The case where such a literal $L'$
exists and the case where it does not.

\medskip

\noindent
\textsc{Case where $L'$ does not exist:}\enskip
If $D$ contains literals $L'$ of opposite polarity to $L$, then either
$\sem{L'}{\III}{\xi}$ is false and then the truth value of
$\sem{L'}{\III^\star}{\xi}$ is irrelevant or $\sem{L'}{\III}{\xi}$ is true and
then $\sem{L'}{\III^\star}{\xi}$ is true. As for the occurrences of $\cst{p}$ in $D$ that
have the same polarity as $L$, if these were true in $\III$, they are by
construction also true in $\III^\star$.
Finally, $D$ contains no deep
occurrences of $\cst{p}$ (by condition \ref{itm:blocked-no-deep-p} of
Definition~\ref{def:blocked}). 
Since all the literals either keep their
truth value or go from false to true when taking the step from $\III$ to $\III^\star$,
we have that $\sem{D}{\III^\star}{\xi}$ is true.

\medskip

\noindent\textsc{Case where $L'$ exists:}\enskip
We will show that $\sem{D}{\III^\star}{\xi} = \sem{D}{\III}{\xi}$
even if $L'$ has gone from true in $\III$ to false in~$\III^\star$ under
$\xi$. Since $\III \models D$, $\sem{D}{\III^\star}{\xi}$ will then be true.

Since $\III^\star$ flips the truth value of $L'$, by construction of $\III^\star$,
this flipping must be triggered by $C$. Hence, there must exist a
valuation $\xi'$ such that
$\sem{\tau_i}{\III}{\xi'} = \sem{\upsilon_{i}}{\III}{\xi}$ and
$\sem{s_j}{\III}{\xi'} = \sem{t_{j}}{\III}{\xi}$
for every $i, j$.
Let $\xi''$ be the valuation that coincides with $\xi'$ on $C$'s free variables
and with $\xi$ on $D$'s free variables.
Let $D = L' \lor D'$.

Since $C$ is blocked by $L$ in $N$,
all binary flat $L$-resolvents of $C$ with clauses from $N$ are tautologies
(by condition~\ref{itm:blocked-tauto} of Definition~\ref{def:blocked}).
In particular, consider the binary flat $L$-resolvent of $C$ and $D$ of the form
$\bigl(\bigl(\bigvee_{\smash{j=}1}^n
 \allowbreak\neqlit{s_j}{t_j}\bigr) \cor C' \cor D'\bigr)\sigma$,
where $\sigma$ is the most general unifier of
$\tuple{\tau} \UNIF \tuple{\upsilon}$.
This binary flat $L$-resolvent, which must exist by the five conditions of
Definition~\ref{def:flat-l-resolvent},
is a tautology,
and it must be satisfied by $\III^\star$ under $\xi$.

By definition of $\xi''$, $\sem{\neqlit{s_j}{t_{j}}}{\III}{\xi''}$
must be false, and since the terms $\tuple{s}, \tuple{t}$ do not contain $\cst{p}$
(by condition \ref{itm:blocked-no-deep-p} of
Definition~\ref{def:blocked}),
$\sem{\neqlit{s_j}{t_{ij}}}{\III^\star}{\xi''}$ must be false as well.
Moreover, by construction of $\III^\star$,
the only way for the interpretation
of $\cst{p}\tyargs{\tuple{\tau}}\>\tupleii{s}{n}$ in
$\III^\star$ under $\xi''$ to differ from that in $\III$ under $\xi''$ is if
$\sem{C'}{\III}{\xi''}$ is false, and since $C'$ contains only
occurrences of $\cst{p}$ of opposite polarity to $L$
(by conditions \ref{itm:blocked-no-deep-p}~and~\ref{itm:blocked-c'-no-same-pol}
of Definition~\ref{def:blocked}), $\sem{C'}{\III^\star}{\xi''}$ must still be
false after we flipped $L$ to make it true.

Finally, since both $\sem{\neqlit{s_j}{t_{j}}}{\III}{\xi''}$ and
$\sem{C'}{\III^\star}{\xi''}$ are false and the binary flat $L$-resolvent
of $C$ and $D$ is a tautology,
$D'\sigma$ must be true in $\III^\star$ under every valuation.
Since $\sigma$ is a most general unifier and $\xi''$
assigns the same semantics to $\tuple{\tau}$ and $\tuple{\upsilon}$,
effectively ``unifying'' them,
we also have that $\sem{D'}{\III^\star}{\xi''}$ is true
and hence $\sem{D'}{\III^\star}{\xi}$ is true.
Thus $\sem{D}{\III^\star}{\xi}$ is true, as desired.
\end{proof}

\begin{lem}
\label{lem:bce-clause-equisat}
Let $N$ be a clause set and $C$ be a clause contained
in $N$. If $C$ is blocked by a literal in $N$, then $N \setminus \{C\}$ is
satisfiable if and only if $N$ is satisfiable.
\end{lem}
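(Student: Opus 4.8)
The plan is to dispatch the forward implication trivially---every model of $N$ is a model of the subset $N \setminus \{C\}$---and to spend the effort on the backward implication, turning a model of $N \setminus \{C\}$ into a model of $N$.

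So write $C = L \lor C'$ with $C$ blocked by $L$, and let $\III_0 \models N \setminus \{C\}$. I would repeatedly apply the flipping construction of Definition~\ref{def:flipping} to $L$ in $C$, setting $\III_{k+1} = \III_k^\star$. Since flipping alters only the interpretation, $C$ stays blocked by $L$ in $N$ throughout, so Lemma~\ref{lem:bce-clause-equisat-flipped-interp} applies at every step and yields $\III_k \models N \setminus \{C\}$ for all $k$. Moreover, because $C'$ contains no deep occurrence of $\cst{p}$ and no $\cst{p}$-literal of the same polarity as $L$ (conditions~\ref{itm:blocked-no-deep-p} and~\ref{itm:blocked-c'-no-same-pol} of Definition~\ref{def:blocked}), the map sending $\III$ to $\III^\star$ acts monotonically on the denotation of $\cst{p}$: it only grows when $L$ is positive and only shrinks when $L$ is negative. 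Hence the sequence $\III_0, \III_1, \dots$---continued through limit ordinals by pointwise suprema (resp.\ infima)---stabilizes at a fixpoint interpretation $\III^\infty$. Unwinding Definition~\ref{def:flipping}, a fixpoint of the flip is exactly an interpretation in which, under every valuation, $L$ holds whenever $C'$ fails, i.e.\ $\III^\infty \models C$; combined with $\III^\infty \models N \setminus \{C\}$ this gives $\III^\infty \models N$, so $N$ is satisfiable.

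The step I expect to be the real obstacle is the passage to the limit, in two respects. First, one must verify that $\III^\infty$ is a legitimate general interpretation: its denotation of $\cst{p}$ is a supremum of $\lambda$-definable predicates, and a Henkin domain need not be closed under such suprema, so one has to show it nonetheless lies in the domain of $\cst{p}$'s type---for instance by invoking the comprehension principle on an explicit least-fixpoint description of the stabilized predicate, or by first passing to a model of $N \setminus \{C\}$ on which the flip already stabilizes. Second, one must re-establish $\III^\infty \models N \setminus \{C\}$ at limit stages; here the no-deep-$\cst{p}$ condition is essential, since $\cst{p}$ then occurs in the clauses of $N \setminus \{C\}$ only as a literal head, so---clauses being finite and the $\cst{p}$-literals moving monotonically---any clause falsified by $\III^\infty$ would already be falsified by some $\III_k$, contradicting Lemma~\ref{lem:bce-clause-equisat-flipped-interp}. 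The remaining parts---monotonicity of the flip and the fixpoint-versus-$C$ equivalence---are routine calculations against the definitions.
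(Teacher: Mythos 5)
Your overall strategy---trivial forward direction, then repairing a model of $N \setminus \{C\}$ via the flipping construction---is the right one, and you correctly identify Lemma~\ref{lem:bce-clause-equisat-flipped-interp} as the workhorse. But the transfinite iteration is both unnecessary and the source of a genuine gap. The paper applies the flip exactly once: because $\varphi$ in Definition~\ref{def:flipping} quantifies ($\iexists\tuple{y}$, resp.\ $\iforall\tuple{y}$) over \emph{all} free variables of $C$ at once, the single interpretation $\III^\star$ already makes $L$ true under every valuation $\xi$ for which $C'$ fails---$\xi$ itself supplies the existential witnesses (positive case) or the universal counterexample (negative case). Hence $\III^\star \models C$ directly, and $\III^\star \models N \setminus \{C\}$ by Lemma~\ref{lem:bce-clause-equisat-flipped-interp}; no fixpoint is needed. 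The flip is a one-shot global repair, not a local one to be repeated.

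The gap in your version is exactly the step you flag and then leave open: the existence of the limit interpretation $\III^\infty$. In the general (Henkin) semantics this paper works with, the domain of $\cst{p}$'s type is only guaranteed to contain denotations of $\lambda$-abstractions (the comprehension principle); it is not closed under pointwise suprema of ordinal-indexed chains, so $\III^\infty$ need not be a legitimate interpretation at all. Neither of your proposed repairs is workable as stated: the stabilized predicate has no evident $\lambda$-term description (each step unfolds $C'$ under the \emph{previous} interpretation, and an ``explicit least-fixpoint description'' would rely on second-order quantification whose range in a Henkin model is precisely what is in question), while ``first passing to a model on which the flip already stabilizes'' assumes what is to be proved. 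You would additionally have to extend the $\lambda$-designation function $\LL$ coherently at limit stages to keep the interpretation proper, and re-derive $\III^\infty \models N \setminus \{C\}$ there; neither follows from the successor-stage lemma. All of these difficulties disappear once one observes that a single flip suffices.
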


\begin{proof}
The ``if'' direction is trivial.
For the other direction, let $N'$ consist of all clauses from $N \setminus
\{C\}$ with their type and term variables renamed so that they share no
variables with $C$. Let $\III$ be a model of $N'$. By
Lemma~\ref{lem:bce-clause-equisat-flipped-interp}, $\III^\star$ is a model of
each $D \in N'$.

We also need to show that $\III^\star$ is a model of $C$.
Specifically, we must show that $\sem{C}{\III^\star}{\xi}$ is true for any
valuation~$\xi$. Let $C = L \lor C'$.
If $\sem{C'}{\III^\star}{\xi}$ is true, we are done. Otherwise,
first suppose $L$ is positive. Then $\xi$ provides the necessary witnesses for the
$\iexists$ quantifier in the definition of $\varphi$ in
Definition~\ref{def:flipping}, making the interpretation of $\cst{p}$ by $\III^\star$
true in that case, as in the first part of Example~\ref{ex:flipping-twice}.
Hence $\sem{L}{\III^\star}{\xi}$ is true, and thus $\sem{C}{\III^\star}{\xi}$ is true.
Next, suppose $L$ is negative. Then $\xi$ provides a counterexample to the
$\iforall$ quantifier in the definition of $\varphi$, making the interpretation
of $\cst{p}$ by $\III^\star$ false in that case, as in the second part of
Example~\ref{ex:flipping-twice}. Hence $\sem{L}{\III^\star}{\xi}$ is true, and thus
$\sem{C}{\III^\star}{\xi}$ is true.

Since $\III^\star \models N' \cup \{C\}$, we have that $N$ is satisfiable.
\end{proof}

\begin{defi}
Given a finite clause set, \emph{blocked clause elimination} (BCE) repeatedly
removes blocked clauses until no such clauses remain.
\end{defi}

The procedure is confluent and hence yields a unique result. This is easy to see
because removing a blocked clause will only make more clauses blocked; it can
never ``unblock'' a clause.

\begin{thm}
The result of applying BCE to a clause set $N$ is
satisfiable if and only if $N$ is satisfiable.
\end{thm}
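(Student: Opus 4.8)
The plan is to reduce the theorem to the one-step result already established, namely Lemma~\ref{lem:bce-clause-equisat}, via a straightforward induction on the number of clauses BCE removes. Since $N$ is finite and each removal strictly decreases the cardinality of the current clause set, BCE terminates after finitely many steps, producing a chain $N = N_0 \supsetneq N_1 \supsetneq \cdots \supsetneq N_k$, where at each transition $N_{i+1} = N_i \setminus \{C_i\}$ and $C_i$ is, by definition of BCE, a clause blocked by some literal in the \emph{current} set $N_i$. The result of BCE is $N_k$, and it is well defined (independent of the order of removals) by the confluence remark preceding the statement, so ``the result of applying BCE'' is unambiguous.

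For the inductive step I would apply Lemma~\ref{lem:bce-clause-equisat} to $N_i$ and the clause $C_i$: since $C_i$ is blocked by a literal in $N_i$, the lemma gives that $N_{i+1} = N_i \setminus \{C_i\}$ is satisfiable if and only if $N_i$ is satisfiable. Composing these equivalences along the chain yields that $N_k$ is satisfiable if and only if $N_0 = N$ is satisfiable, which is the claim. No separate base case beyond $k = 0$ is needed, and the finiteness hypothesis (inherited from the definition of BCE) guarantees the chain is finite.

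I do not expect a real obstacle here; all the substance is in Lemma~\ref{lem:bce-clause-equisat-flipped-interp} (the flipping construction $\III^\star$) and its consequence Lemma~\ref{lem:bce-clause-equisat}, and the theorem is merely their iterated form. The only point worth stating explicitly is that Lemma~\ref{lem:bce-clause-equisat} is invoked with respect to the \emph{ambient} set at the moment of removal rather than the original $N$ — which is precisely how BCE applies the ``blocked'' test, re-evaluating it on the current clause set after each deletion, so there is no mismatch.
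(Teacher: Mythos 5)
Your proposal is correct and takes essentially the same approach as the paper, which proves the theorem simply ``by iteration of Lemma~\ref{lem:bce-clause-equisat}''; your spelled-out induction on the finite removal chain, with the blockedness test re-evaluated against the current set at each step, is exactly the intended argument.
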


\begin{proof}
This follows by iteration of Lemma~\ref{lem:bce-clause-equisat}.
\end{proof}

\section{Quasipure Literal Elimination}
\label{sec:quasipure-literal-elimination}

\emph{Pure literal elimination} (PLE) is one of the simplest optimizations
implemented in SAT solvers. It is a special case of variable elimination
\cite{sp-04-niver,cs-00-zres}: If a given variable always occurs with the same
polarity in a problem, the solver can assign it that polarity without loss of
generality, making all the clauses that contain it tautologies. PLE consists of
recursively deleting all such clauses.
PLE's generalization to first-order logic considers literals
$\arbpredlit{\cst{p}(\tuple{s})}$, where the arguments $\tuple{s}$ are
ignored by the analysis; only the polarity is considered. The same idea carries
over to higher-order logic.

\begin{exa}
\label{ex:ple}
Consider the clause set $N = \{ \cst{p}\>x \cor \cst{q}\>\cst{a}\> x{,}\allowbreak\;
\cst{p}\>(\cst{f}\>x){,}\allowbreak\; \neglit{\cst{q}\>\cst{a}\> \cst{a}} \}$.
Since $\cst{p}$ occurs only positively in $N$, PLE removes the two first
clauses. At that point, $\cst{q}$ occurs only negatively in the remaining
singleton clause set and can be removed as well. The result is the empty set,
which is obviously satisfiable, indicating that $N$ is satisfiable.
As model of $N$, we can take an interpretation $\III$ that makes all
$\cst{p}$-literals true and all $\cst{q}$-literals false.
\end{exa}

\begin{exa}
\label{ex:sle-i}
Consider the clause set $N' = \{
\cst{p}\>\cst{a}{,}\allowbreak\;
\neglit{\cst{p}\>x} \lor \cst{p}\>(\cst{f}\>x)
\}$. PLE does not apply because $\cst{p}$ occurs with both
polarities. Yet we notice that $\cst{p}$ occurs positively in
both clauses, and hence that the same reasoning as in Example~\ref{ex:ple}
applies: We can satisfy both clauses by making $\cst{p}$-literals true.
Using $\III$ from Example~\ref{ex:ple}, we have $\III \models N'$.
\end{exa}

\begin{exa}
\label{ex:sle-ii}
Consider the clause set $N'' = \{
\cst{p}\>\cst{a}{,}\allowbreak\;
\cst{q}\>x \lor \cst{p}\>(\cst{f}\>x){,}\allowbreak\;
\neglit{\cst{q}\>(\cst{f}\>\cst{a})}{,}\allowbreak\;
\neglit{\cst{p}\>x} \lor \neglit{\cst{q}\>(\cst{h}\>\cst{p}\>(\cst{p}\>\cst{b}))}
\}$. PLE does not apply because $\cst{p}$ and $\cst{q}$ occur with both polarities.
In addition, $\cst{p}$ also occurs unapplied and deep within a term.
Yet each clause contains either
a positive $\cst{p}$-literal or a negative $\cst{q}$-literal. Thus $\III
\models N''$, where $\III$ is as in Example~\ref{ex:ple}.
The additional literals are harmless.
\end{exa}

Examples \ref{ex:sle-i} and \ref{ex:sle-ii} suggest that pure literals are a
needlessly restrictive criterion in first- and higher-order logic. We propose a
generalization to ``quasipure literals.''
Although we present the criterion in a higher-order setting, it is equally
applicable for first-order logic. (In contrast, it is uninteresting for
propositional logic, because the only clauses with the same predicates, or
rather variables, with opposite polarities are tautologies, and these would be
deleted before pure literal elimination is attempted.)

\begin{defi}
\label{def:quasipure-set}
A \emph{polarity map} is a function that maps each predicate symbol in $\Sigma$
to a polarity ($+$ or $-$).
A set $P$ of predicate symbols is \emph{quasipure} in a clause set $N$ with a
polarity map $m$
if for every clause in $N$ that contains an
element of $P$, there exists a predicate symbol $\cst{p} \in P$ such that the
clause contains a $\cst{p}$-literal with polarity $m_\cst{p}$.
The set $P$ is \emph{quasipure} in $N$ if there exists a polarity map $m$
such that $P$ is quasipure in $N$ with $m$.
\end{defi}

In Example~\ref{ex:ple}, $\{\cst{p}, \cst{q}\}$ is quasipure in $N$ with $m_\cst{p} =
{+}$ and $m_\cst{q} = {-}$.
In Example~\ref{ex:sle-i}, $\{\cst{p}\}$ is quasipure in $N'$ with $m_\cst{p} = {+}$. In
Example~\ref{ex:sle-ii}, $\{\cst{p}, \cst{q}\}$ is quasipure in $N''$ with $m_\cst{p} =
{+}$ and $m_\cst{q} = {-}$. For this last example, it is crucial to consider
$\cst{p}$ and $\cst{q}$ together; neither of the singletons $\{\cst{p}\}$ and
$\{\cst{q}\}$ is quasipure in $N''$.

\begin{defi}
\label{def:quasipure-sym}
A predicate symbol $\cst{p}$ is \emph{quasipure} in a clause set $N$
with polarity $s \in \{+, -\}$
if there exists a set $P$ of predicate
symbols with $\cst{p} \in P$ and a polarity map $m$ such that $m_\cst{p} = s$ and
$P$ is quasipure in $N$ with $m$.
The symbol $\cst{p}$ is \emph{quasipure} in $N$ if there exists a polarity
$m_\cst{p} \in \{+, -\}$ such that $\cst{p}$ is quasipure in $N$ with $m_\cst{p}$.
A literal $L = \arbpredlit{\cst{p}\,\ldots}$ is \emph{quasipure} in $N$ if
$\cst{p}$ is quasipure in $N$ with $L$'s polarity.
\end{defi}

Notice that a predicate symbol that does not occur in a clause set is trivially
quasipure in that clause set.

Deleting a clause containing a quasipure literal might create new opportunities
for quasipure literal elimination, but it never ruins existing ones.
Therefore, the following nondeterministic procedure is confluent and hence
yields a unique result:

\begin{defi}
\label{def:sle}
Given a finite clause set, \emph{quasipure literal elimination} (QLE) repeatedly
removes clauses containing quasipure literals until no such literals remain.
\end{defi}

Although QLE is defined by iteration, it is always possible to remove all
clauses at the same time:

\begin{lem}
\label{lem:qle-single-swoop}
Let $N$ be a finite clause set and let $N'$ be the result of
QLE. Then there exists a predicate symbol set $P$ and a polarity map $m$
such that $P$ is quasipure in $N$ and for every clause in $N \setminus N'$
there exists a predicate symbol $\cst{q} \in P$ such that the
clause contains a $\cst{q}$-literal with polarity $m_\cst{q}$.
\end{lem}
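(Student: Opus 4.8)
The plan is to run the iterative QLE procedure of Definition~\ref{def:sle} and extract, at each step, the witnessing predicate set and polarity map, then amalgamate these witnesses into a single pair $(P, m)$. So let $N = N_0 \supsetneq N_1 \supsetneq \cdots \supsetneq N_k = N'$ be the sequence of clause sets produced by QLE, where $N_{i+1}$ is obtained from $N_i$ by removing one clause $C_i$ that contains a literal $L_i$ quasipure in $N_i$. By Definition~\ref{def:quasipure-sym}, for each $i$ there is a predicate set $P_i \ni \cst{p}_i$ (where $\cst{p}_i$ heads $L_i$) and a polarity map $m^{(i)}$ with $m^{(i)}_{\cst{p}_i}$ equal to $L_i$'s polarity, such that $P_i$ is quasipure in $N_i$ with $m^{(i)}$. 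I would then set $P = P_0$ and $m = m^{(0)}$ — that is, use only the witnesses from the \emph{first} step — and argue that this single pair already works for the whole sequence.

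First I would show that $P = P_0$ is quasipure in $N_0 = N$ with $m = m^{(0)}$: this is immediate from the choice of $P_0$, $m^{(0)}$. Next, the crucial claim is that for \emph{every} clause $C_i$ removed at step~$i$ (not just $C_0$), there is a predicate symbol $\cst{q} \in P_0$ such that $C_i$ contains a $\cst{q}$-literal with polarity $m_\cst{q}$. To see this, note that $C_i \in N_i \subseteq N_0 = N$, and $C_i$ does contain a predicate symbol of $P_0$ — namely, it must, for otherwise $C_i$ could never have been blocked from $P_0$'s perspective; more precisely, I would argue by a small invariant that the quasipurity of $P_0$ with $m^{(0)}$ in $N_0$ is inherited by every $N_i$, since $N_i \subseteq N_0$ and Definition~\ref{def:quasipure-set} only quantifies universally over clauses. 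Hence $P_0$ is quasipure in $N_i$ with $m^{(0)}$ for every $i$. Now, if $C_i$ contained \emph{no} element of $P_0$, then it would contain no $\cst{q}$-literal for any $\cst{q} \in P_0$, and in particular the required $\cst{q}$-literal condition would be vacuously problematic — but actually we need the opposite: we must rule out that $C_i$ has no $P_0$-symbol at all.

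Here is where the main obstacle lies, and the fix. It is \emph{not} true in general that every removed clause $C_i$ mentions a symbol of $P_0$; a clause could be removed because some \emph{other} predicate $\cst{p}_i \notin P_0$ became quasipure later. The honest argument is therefore: take $P = \bigcup_{i=0}^{k-1} P_i$ and define $m$ by patching together the $m^{(i)}$ — but this requires the $m^{(i)}$ to agree on overlaps. The key lemma making this work is the confluence remark preceding Definition~\ref{def:sle}: removing a clause never ruins a quasipurity witness. Concretely, I would prove by downward induction on $i$ that $P_i$ remains quasipure in $N$ (the original set, not $N_i$) with a polarity map extending $m^{(i)}$, using that every clause of $N$ not in $N_i$ was removed at some step $j < i$ and hence — by the induction hypothesis applied in the forward direction — already satisfied a $P_j$-witness condition with a compatible polarity. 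The cleanest route, which I would adopt, is: process the steps in order, maintaining a growing set $P$ and a partial polarity map $m$; at step~$i$, if $\cst{p}_i \notin P$, add $P_i$ to $P$ and extend $m$ by $m^{(i)}$ restricted to the newly added symbols (the quasipure literal $L_i$ constrains $m_{\cst{p}_i}$, and the remaining symbols of $P_i \setminus P$ are free to be set as $m^{(i)}$ dictates); if $\cst{p}_i \in P$ already, one must check $m_{\cst{p}_i}$ equals $L_i$'s polarity, which follows because $\cst{p}_i$'s membership in $P$ was justified by an earlier quasipure literal of the same symbol, and a single symbol cannot be quasipure with both polarities in the same clause set without producing a tautology — a fact I would isolate as the real content of the argument. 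Once $(P, m)$ is built this way, $P$ is quasipure in $N$ by construction, and every removed clause $C_i$ witnesses $\cst{p}_i \in P$ with $m_{\cst{p}_i}$ its polarity, which is exactly the statement.
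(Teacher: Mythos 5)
Your overall plan---take the union of the per-step witness sets and patch the per-step polarity maps together---is the same as the paper's, and you correctly isolate the one real difficulty: the maps $m^{(i)}$ must be mutually consistent. But the argument you give for consistency rests on a false claim. It is not true that ``a single symbol cannot be quasipure with both polarities in the same clause set without producing a tautology'': in the set $\{\cst{p}\>\cst{a} \lor \lnot\>\cst{p}\>\cst{b}\}$ the singleton $\{\cst{p}\}$ is quasipure with $m_\cst{p}={+}$ and also with $m_\cst{p}={-}$ (Definition~\ref{def:quasipure-set} only asks that each clause containing $\cst{p}$ have \emph{some} $\cst{p}$-literal of the chosen polarity), yet the clause is not a tautology. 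The conflict you try to exclude can genuinely arise in a run of QLE: for $N=\{\cst{p}\>\cst{a} \lor \lnot\>\cst{p}\>\cst{b},\; \lnot\>\cst{p}\>\cst{c}\}$, the second clause may be removed first via the quasipure literal $\lnot\>\cst{p}\>\cst{c}$ (so $\cst{p}$ enters $P$ with polarity $-$), after which the first clause may be removed via the quasipure literal $\cst{p}\>\cst{a}$, whose polarity is $+$. Your step-$i$ check then fails, so your $m$ is not well defined as argued. (The lemma itself survives here---keeping $m_\cst{p}={-}$ works---but that requires a different justification.) A second, related gap is the assertion that $P$ is quasipure in $N$ ``by construction'': a clause of $N$ containing a symbol of $P_j$ is only guaranteed, by quasipurity of $P_j$ in $N_j$, a witness literal for \emph{some} $\cst{q}\in P_j$ with polarity $m_j(\cst{q})$; if your $m$ assigns $\cst{q}$ a polarity inherited from an earlier step, that witness is lost.

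The paper closes both gaps not by proving agreement but by arranging disjointness: without loss of generality each $P_j$ is restricted to symbols that still occur in the clause set at iteration~$j$. Since $P_j$ is quasipure there, every clause containing an occurrence of a $P_j$-symbol carries a quasipure literal and is removed at that iteration, so $P_j$-symbols are absent from all later clause sets; hence the restricted sets are pairwise disjoint, the patched map $m$ is trivially well defined, and it agrees with $m_j$ on \emph{all} of $P_j$---which is exactly what is needed both for the removed clauses and for quasipurity of $P$ in $N$. If you replace your consistency check by this restriction-and-disjointness observation, your proof goes through.
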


\begin{proof}
The iterative process defining QLE gives rise to a finite sequence $(P_1,
m_1),\allowbreak \dotsc,\allowbreak (P_n, m_n)$ of predicate symbol sets and
polarity maps. Without loss of generality, we assume that each $P_j$ does not
contain predicate symbols that do not occur in the clause set at iteration
$j$. Then the sets $P_j$ are clearly mutually disjoint and we can take $P = P_1
\cup \cdots \cup P_n$ as the desired witness. As for the polarity map $m$, we
associate each $\cst{p} \in P_j$ with $m_j(\cst{p})$.
\end{proof}

The key property of QLE is that it preserves unsatisfiability:

\begin{lem}
Let $C \in N$ be a clause containing a quasipure literal. If $N \setminus \{C\}$
is satisfiable, then $N$ is satisfiable.
\end{lem}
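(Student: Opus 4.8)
Suppose $N\setminus\{C\}$ is satisfiable; fix $\III \models N\setminus\{C\}$ and let $L = (\lnot)\,\cst{p}\tyargs{\tuple{\tau}}\>\tuple{s}$ be a quasipure literal of $C$. Unfolding Definition~\ref{def:quasipure-sym}, we obtain a set $P$ of predicate symbols with $\cst{p}\in P$ and a polarity map $m$ such that $m_{\cst{p}}$ equals the polarity of $L$ and $P$ is quasipure in $N$ with $m$. The plan is the higher-order analogue of the classical pure-literal argument: I would keep $\III$ unchanged except that I force every $\cst{q}\in P$ to its mandated polarity --- making $\cst{q}$ the constantly-true predicate when $m_{\cst{q}} = {+}$ and the constantly-false predicate when $m_{\cst{q}} = {-}$ --- and then verify that the resulting interpretation models all of $N$. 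As in the proof of Theorem~\ref{thm:spe-equisat} I would assume without loss of generality that $\III$ obeys the distinct domain assumption, and, restricting to the submodel whose universe consists of the denotations of the monomorphic types if necessary, that every domain is named by a unique monomorphic type.

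Concretely, I would build $\III' = (\IIIty, \II', \LL')$ as follows. For $\cst{q}\in P$ and a domain tuple $\tuple{\DD}$ with monomorphic name $\tuple{\upsilon}$, if $\cst{q}\tyargs{\tuple{\upsilon}}$ has a predicate type with $n$ arguments, I would set $\II'(\cst{q},\tuple{\DD})$ to the $n$-times-curried constantly-true function of that type if $m_{\cst{q}} = {+}$ and to the constantly-false one if $m_{\cst{q}} = {-}$; these are $\semii{\lambda\tupleii{x}{n}.\,\itrue}{\III}$ and $\semii{\lambda\tupleii{x}{n}.\,\ifalse}{\III}$ at the appropriate types, and they exist in the corresponding domain by the comprehension principle. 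In all remaining cases, and for every $\cst{q}\notin P$, I would keep $\II'(\cst{q},\tuple{\DD}) = \II(\cst{q},\tuple{\DD})$. Since $\II'$ differs from $\II$ only on (non-logical) symbols of $P$, it still interprets the logical symbols in the usual way, so Fitting's result \cite[Proposition~2.18]{fitting-2002} yields a unique $\LL'$ extending $\IIIty$ and $\II'$ to a proper interpretation $\III'$. A routine induction on term structure, using properness of $\III$ and of $\III'$, then shows that $\sem{t}{\III'}{\xi} = \sem{t}{\III}{\xi}$ for every term $t$ in which no symbol of $P$ occurs.

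It then remains to check $\III' \models D$ for each $D\in N$, in three cases. If $D$ contains no symbol of $P$, then $\sem{D}{\III'}{\xi} = \sem{D}{\III}{\xi}$ by the invariance above, and $\III\models D$ because $D\in N\setminus\{C\}$ (recall $C$ contains $\cst{p}\in P$). If $D\in N\setminus\{C\}$ does contain a symbol of $P$, then quasipurity of $P$ in $N$ with $m$ yields a $\cst{q}$-literal of $D$ of polarity $m_{\cst{q}}$ for some $\cst{q}\in P$; by construction the head of that literal denotes a constant predicate whose value matches that polarity, so the literal --- and hence $D$ --- is true in $\III'$ under every valuation. Finally $C = L \cor C'$, where $L$ is a $\cst{p}$-literal of polarity $m_{\cst{p}}$ with $\cst{p}\in P$, so the same argument makes $L$, and thus $C$, true in $\III'$. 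Hence $\III' \models N$, and $N$ is satisfiable; this parallels the companion Lemma~\ref{lem:bce-clause-equisat} for blocked clauses.

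The delicate point is the well-definedness and well-formedness of $\III'$: the forced value of $\II'(\cst{q},\tuple{\DD})$ must be type-correct and must actually lie in the domain associated with $\cst{q}\tyargs{\tuple{\upsilon}}$'s type (guaranteed by the comprehension principle), and the construction must be unambiguous in the presence of polymorphism --- guaranteed by the distinct domain assumption together with the assumption that each domain has a unique monomorphic name, which in particular pins down the arity $n$ at which $\cst{q}$ is applied. Once this bookkeeping is in place --- essentially that of Definition~\ref{def:flipping} and Theorem~\ref{thm:spe-equisat} --- the satisfaction argument is immediate and, in contrast to the proofs for predicate elimination and blocked clause elimination, involves no reasoning about flat resolvents: every clause meeting $P$ is satisfied outright by its guaranteed mandated-polarity literal.
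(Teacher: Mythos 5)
Your proof is correct, and its central device is exactly the paper's: reinterpret every $\cst{q} \in P$ as the constant predicate dictated by $m_{\cst{q}}$, justify existence of these constants by the comprehension principle, and observe that clauses not touching $P$ keep their truth value while clauses touching $P$ acquire a guaranteed true literal. Where you genuinely diverge is in how $P$ is obtained and how $N$ is partitioned. The paper first establishes an auxiliary ``single-swoop'' result (Lemma~\ref{lem:qle-single-swoop}) yielding one global set $P$ and polarity map $m$ covering the entire iterated QLE run, lets $N_0$ be the QLE fixpoint (which contains no $P$-symbols), and then splits $N$ into $N_0$ versus $N \setminus N_0$. You instead take the local $P$ and $m$ witnessing quasipurity of $C$'s literal and split $N$ into clauses mentioning no symbol of $P$ (all of which lie in $N \setminus \{C\}$ and are unaffected by the reinterpretation) versus clauses mentioning some symbol of $P$ (each of which, directly by Definition~\ref{def:quasipure-set}, contains a mandated-polarity literal made true by the flip). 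Your decomposition is more local and makes the lemma self-contained, at no loss of correctness; the paper's route buys a statement (the single-swoop lemma) that is of independent interest for implementing QLE in one pass.

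Two small wrinkles, neither fatal. First, Fitting's Proposition~2.18 gives only \emph{uniqueness} of the $\lambda$-designation function, not existence; existence of $\LL'$ should be argued as in Definition~\ref{def:flipping} and Theorem~\ref{thm:spe-equisat}, by evaluating each $\lambda$-body with the $P$-symbols textually replaced by $\lambda\tuple{x}.\,\itrue$ or $\lambda\tuple{x}.\,\ifalse$ and invoking comprehension in $\III$ --- you gesture at exactly this bookkeeping, so the gap is cosmetic. Second, the move of ``restricting to the submodel named by monomorphic types'' is both delicate (shrinking the universe of a Henkin model is not obviously harmless) and unnecessary: it is simpler to redefine $\II'(\cst{q},\tuple{\DD})$ at \emph{every} domain tuple $\tuple{\DD}$ at which the instance has predicate type, since comprehension supplies the constant function in each such domain regardless of whether $\tuple{\DD}$ is named; this also quietly repairs a looseness that the paper's own proof shares, namely that a mandated-polarity literal with polymorphic type arguments could otherwise be evaluated at an unnamed domain tuple where $\cst{q}$ was not reinterpreted.
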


\begin{proof}
Let $\III$ be a model of $N \setminus \{C\}$. Let $P$ be the set of predicate
symbols and $m$ the polarity map whose existence is guaranteed by
Lemma~\ref{lem:qle-single-swoop}.
Let $N_0 \subseteq N \setminus \{C\}$ be the result of
applying QLE on $N$. Clearly, $N_0$ contains no occurrences of the symbols in
$P$. Define $\III'$ based on $\III$ by redefining the semantics
of each monomorphic instance
$\cst{p}\tyargs{\tuple{\tau}} : \upsilon_1 \to \cdots \to \upsilon_n \to \omicron$
of symbol $\cst{p}$ such that $m_\cst{p} = {+}$ or $m_\cst{p} = {-}$: If
$m_\cst{p} = {+}$, interpret $\cst{p}\tyargs{\tuple{\tau}}$ as the predicate
that is uniformly true; otherwise, interpret $\cst{p}\tyargs{\tuple{\tau}}$ as
the predicate that is uniformly false. By the comprehension principle, both of
these predicates are guaranteed to exist in the interpretation of the type
$\upsilon_1 \to \cdots \to \upsilon_n \to \omicron$.
Now $\III'$ coincides with $\III$, since $N_0$ contains no $P$ symbols, and
thus $\III'$ is a model of $N_0$. In addition, $\III'$ is a model of $N$,
because each clause in $N \setminus N_0$ contains a quasipure literal, which is
satisfied by $\III'$.
\end{proof}

Definition~\ref{def:sle} suggests a naive, nondeterministic procedure for
discovering and eliminating quasipure predicate symbols: Choose a predicate
symbol $\cst{p}$ and a polarity $m_\cst{p}$, and take $P = \{\cst{p}\}$. If the
predicate symbol occurs with the wrong polarity in a clause
$\arbpredlit{\cst{p}\,\ldots} \lor \arbpredlit{\cst{q}_1\,\ldots} \lor \cdots \lor
\arbpredlit{\cst{q}_n\,\ldots} \lor C$, try to extend
the set $P$ with one of the $\arbpredlit{\cst{q}_i}$'s
and the polarity map $m$ accordingly, and continue
recursively with $\cst{q}_i$. In Section~\ref{sec:implementation}, we will see a
more efficient approach based on a SAT encoding.

\section{Implementation}
\label{sec:implementation}

We implemented the elimination techniques described in Sections
\ref{sec:hidden-literal-based-elimination}~to~\ref{sec:quasipure-literal-elimination}
in the Zipperposition prover. For HLBE, PE, and BCE, which had been studied by
Vukmirovi\'c et al., we could directly adapt their code
\cite[Section~6]{vukmirovic-et-al-2023-sat}. The data structure and algorithms
they described and implemented could be generalized to handle polymorphic
higher-order logic. For QLE, we developed our own code.

Zipperposition is a higher-order prover that implements the
$\lambda$-superposition calculus \cite{bentkamp-et-al-2021-hosup}, a
generalization of standard superposition to
classical rank-1 polymorphic higher-order logic---the logic described in
Section~\ref{sec:clausal-higher-order-logic}.
The prover is highly competitive: It won in the higher-order theorem
division of the CADE ATP Systems Competition (CASC)
\cite{sutcliffe-desharnais-2022-casc} in 2020, 2021, and 2022.

By default, Zipperposition \emph{immediately clausifies} the initial problem as
much as possible; then it optionally invokes preprocessing elimination
techniques. If inprocessing is enabled, the prover also invokes the elimination
techniques at regular intervals from within the saturation loop.
Immediate clausification performs well in practice, but an even more successful
strategy is to \emph{delay clausification}, interleaving clausification with
superposi\-tion-style calculus rules \cite[Section~4]{vukmirovic-et-al-2023-hot}.
Then it makes little sense to apply preprocessing elimination techniques;
inprocessing seems more appropriate.

\oursubsection{Hidden-Literal-Based Elimination}

HLBE relies on matching. In our setting, it needs to consider type variables and
higher-order terms. Our implementation uses an efficient approximation of
higher-order matching, which recognizes $\lambda y.\> y\>
\cst{a}$ as an instance of $\lambda y.\> y\> x$, but not $\cst{a}$ as an
instance of $y\>\cst{a}$ (with $y := \lambda x.\>x$). The same approximated
matching algorithm is used in Zipperposition to efficiently recognize subsumed
clauses. This weaker matching reduces the applicability of HLBE, but it does not
compromise its soundness.

\oursubsection{Predicate Elimination}

Compared with HLBE, more work was needed to make predicate elimination cope with
polymorphism and higher-order logic.
For polymorphism, the previous code simply did
not consider polymorphic predicates as predicates. Predicates needed to have a
type declaration of the form $\tau_1 \times \cdots \times \tau_n \to \omicron$.
Now, predicate symbols can have type declarations of the form
$\Pi\tuple{\alpha}.\>\tau_1 \to \cdots \to \tau_n \to \omicron$ as well as
$\Pi\tuple{\alpha}.\>\tau_1 \to \cdots \to \tau_n \to \alpha_i$;
via instantiation, $\alpha_i$ can become $\omicron$ or $\dots \to \omicron$.

For SPE, the type arguments of eliminated predicate symbols must be unified, as
per Definition~\ref{def:flat-resolvent}. For DPE, the type arguments must be
distinct type variables. In addition, for DPE, we must check that the predicate
is polymorphism-safe.

Adding support for higher-order logic required changing the definition of
singular predicates, which is used by both SPE and DPE, to check that the
predicate symbol to eliminate does not occur deep in the clauses that define the
symbol. In addition, for DPE, we need to synthesize a $\lambda$-abstraction
to replace any deep occurrences of the predicate outside the definition set.

\oursubsection{Blocked Clause Elimination}

Adding support for polymorphism to BCE was straightforward: We simply ensured
that type arguments are unified when computing flat resolvents and added a
polymorphism-safety check.

To support higher-order logic, we added a check that the $\cst{p}$-literal on
which the clause is resolved in the only $\cst{p}$-literal of that polarity
(corresponding to condition~\ref{itm:blocked-c'-no-same-pol} of
Definition~\ref{def:blocked}). We also disabled the code that computed $n$-ary
resolvents for $n > 2$. Finally, we added code to compute the list of all deep
predicate symbols $\cst{q}$ and made sure clauses are never blocked on a
$\cst{q}$-literal. These mechanisms come into play only if some higher-order
construct is detected in the input problem; otherwise, the first-order
formulation of BCE is used.

In the presence of equality in the logic, BCE relies on a congruence closure
algorithm to detect valid clauses \cite[Section~6.1]{ksstb-2017-blockedfol}. In
our implementation, we rely on a first-order congruence closure algorithm, which
can handle higher-order constructs but does not take advantage of them. For example,
resolving $\cst{p}\>\cst{a} \cor \lnot\> \cst{q}\>\cst{a}$ against
$\lnot\>\cst{p}\>\cst{b} \cor \cst{q}\>\cst{b}$ on the $\cst{p}$-literal yields
the clause $\cst{a} \noteq \cst{b} \cor \lnot\>\cst{q}\>\cst{a} \cor
\cst{q}\>\cst{b}$. Our congruence closure
algorithm can detect the validity of such a clause. On the other hand, because
the algorithm views $\lambda$-abstractions as black boxes, it fails
to recognize $\cst{a} \noteq \cst{b} \cor\allowbreak \lnot\>\cst{q}\>(\lambda
x.\>x\>\cst{a}) \cor\allowbreak \cst{q}\>(\lambda x.\> x\> \cst{b})$ as valid.

\oursubsection{Quasipure Literal Elimination}

A simple and efficient implementation of quasipure literal elimination uses a SAT
solver. Let $N$ be a finite clause set. Without loss of generality, we
can assume that $\Sigma$ is finite. The signature of the generated SAT problem consists of
the variables $\cst{p}^+, \cst{p}^-$ for each predicate symbol $\cst{p}$ in
$\Sigma$, where $\cst{p}^s$ means ``$\cst{p}$ (possibly together with
other predicate symbols) is quasipure with polarity~$s$.'' The SAT problem
consists of the following clauses:
\begin{enumerate}
\itemsep1\jot
\item For each clause in $N$ containing $n$ predicate literals, headed by
  $\cst{q}_1,\dots,\cst{q}_n$ and with respective polarities
  $s_1,\ldots,s_n$, generate $n$~clauses of the form
  \[\cst{q}_1^{s_1}\lor\cdots\lor\cst{q}_{j-1}^{s_{j-1}}
    \lor
    \lnot\,{\cst{q}_i^{-s_j}}
    \lor
    \cst{q}_{j+1}^{s_{j+1}}\lor\cdots\lor\cst{q}_n^{s_n}
   \]
   where $-s$ flips the polarity $s$. Such clauses ensure that whenever a
   literal $\arbpredlit{\cst{q}_i\,\ldots}$ has the wrong polarity according to
   the current variable assignment, one of the other predicate literals must be
   quasipure.

\item For each clause in $N$ containing $n$ predicate literals, headed by
  $\cst{q}_1,\dots,\cst{q}_n$ and with respective polarities
  $s_1,\ldots,s_n$ and containing a deep occurrence of
  $\cst{p}$ (in an argument to a $\cst{q}_j$ or in a functional literal),
  generate the two clauses
\begin{align*}
  & \lnot\,{\cst{p}^+} \lor \cst{q}_1^{s_1}\lor\cdots\lor\cst{q}_n^{s_n}
 && \lnot\,{\cst{p}^-} \lor \cst{q}_1^{s_1}\lor\cdots\lor\cst{q}_n^{s_n}
\end{align*}
   These clauses ensures that whenever $\cst{p}$ occurs deep and is
   nonetheless considered quasipure, one of the predicate literals must be
   quasipure.

\item For each predicate symbol $\cst{p}$ in $\Sigma$, generate the
  clause $\lnot\, \cst{p}^+ \lor \lnot\, \cst{p}^-$. It ensures that a single
  polarity is assigned to a quasipure predicate symbol.

\item Generate the clause $\cst{p}_1^+ \lor \cst{p}_1^- \lor \cdots \lor
  \cst{p}_n^+ \lor \cst{p}_n^-$, where $\{\cst{p}_1,\dots,\cst{p}_n\}$ are the
  predicate symbols in $\Sigma$. It tells the SAT solver to look for a
  nontrivial solution, in which at least one predicate symbol is quasipure.
\end{enumerate}

From a satisfying assignment, we can easily read off a predicate symbol set and a
polarity map. The process can be iterated until we reach a maximal solution,
at which point the SAT solver returns a verdict of ``unsatisfiable.''

It is unclear whether this problem is \textbf{NP}-complete.
Given the nondeterministic nature of the naive procedure, we suspect that it is,
but we have not found a reduction from 3-SAT. Thus, it is unclear whether our use
of a SAT solver is fully satisfactory from a theoretical point of view, even
though it works well in practice.

\section{Evaluation}
\label{sec:evaluation}

We evaluate the techniques presented above by running Zipperposition
\cite{bentkamp-et-al-2021-hosup} on various benchmarks in various
configurations. We consider seven benchmark sets:
\begin{itemize}
\item \emph{S0}:\ a randomly selected subset of 1000 higher-order monomorphic (TH0)
  problems from the Sledgehammer-generated Seventeen benchmark suite
  \cite{desharnais-et-al-2022} in the base configuration, in the language
  fragment called TH0$^-$ in the Seventeen paper;

\smallskip

\item \emph{S1}:\ a randomly selected subset of 1000 higher-order polymorphic (TH1)
  problems from the Seventeen benchmark suite in the base configuration, in the
  language fragment called TH1$^-$ in the Seventeen paper;

\smallskip

\item \emph{TH0}:\ a randomly selected subset of 1000 higher-order monomorphic (TH0)
  problems from the TPTP \cite{sutcliffe-2017-tptp} version 8.0.0;

\smallskip

\item \emph{TH1}:\ a randomly selected subset of 1000 higher-order polymorphic (TH1)
  problems from the TPTP version 8.0.0;

\smallskip

\item \emph{CF}:\ a predefined set of 1000 first-order untyped (CNF and FOF)
  problems from the TPTP;

\smallskip

\item \emph{TF0}:\ all 389 first-order monomorphic (TF0) problems without
  arithmetic from the TPTP;

\smallskip

\item \emph{TF1}:\ all 678 first-order polymorphic (TF1) problems without
  arithmetic from the TPTP.
\end{itemize}
The first four benchmark sets are used to determine how much our techniques can
help on higher-order problems. Among these, S1 and TH1 exercise polymorphism. As
for the remaining three, they are included for comparison; they show how well
the techniques work on first-order problems. The benchmark sets contain some
problems known to be unprovable, which we use to check soundness of the
techniques.

We consider 12 Zipperposition configurations, all derived from the portfolio
of time slices that was used at the 2022 edition of CASC. This portfolio does
not use any of our techniques. Even on first-order problem, it applies the
higher-order $\lambda$-superposition calculus. The 12~configurations are as
follows:

\begin{itemize}
\item \emph{None}:\ the baseline, corresponding to the portfolio used at CASC;

\smallskip

\item $X$\emph{-pre}:\ the baseline modified to use technique $X \in
  \{\text{PE}, \text{BCE}, \text{PLE}, \text{QLE}\}$ as a preprocessor
  in all the time slices;

\smallskip

\item $X$\emph{-in}:\ the baseline modified to use technique $X \in
  \{\text{HLBE}, \text{PE}, \text{BCE}, \text{PLE},\allowbreak \text{QLE}\}$ as an
  inprocessor in all the time slices;

\smallskip

\item \emph{All-pre}:\ the baseline modified to use all of
  \text{PE}, \text{BCE}, and \text{QLE} as preprocessors in all the time slices;

\smallskip

\item \emph{All-in}:\ the baseline modified to use all of
  HLBE, \text{PE}, \text{BCE}, and \text{QLE} as inprocessors in all the time
  slices.
\end{itemize}
In addition, we define the virtual configuration \emph{Union}, consisting of
the virtual portfolio of all other 12~configurations. All problems that are
solved in at least one of the 12~configurations are considered solved by Union,
and only those.

The experiments were carried out on StarExec Miami \cite{sst-2014-starexec}
servers equipped with Intel Xeon E5-2620 v4 CPUs clocked at 2.10 GHz. We used
CPU and wallclock time limits of 120~s.
The raw results are available online.\footnote{\url{https://zenodo.org/record/6997515}}

Figure~\ref{fig:solved-problems} reports how many problems were solved for each
combination of benchmark set and configuration. The last row of the table
presents the total of the seven rows above it. Bold singles out the best
configuration (other than Union) for each benchmark set.

The results are sobering. We see substantial gains on the untyped first-order
benchmarks, but the gains are much more modest, if actually present, on the
typed first-order and the higher-order benchmarks. A possible explanation is
that there is less clausal structure in a higher-order problem. Most of
Zipperposition's time slices clausify the problem lazily, meaning that little
information is visible to our techniques, especially when used as preprocessors.
Another possible explanation might be that the higher-order Seventeen and TPTP
benchmarks look quite different from the untyped first-order TPTP benchmarks,
and our techniques are less applicable. For example, the higher-order TPTP
problems tend to be much smaller than their first-order counterparts. Moreover,
we selected only subsets of the TPTP benchmarks---an evaluation on entire
benchmark suites might yield different results.

The picture is more positive if we look at the Union column of the table.
Clearly, in a portfolio setting, with enough time, the new techniques can make a
useful contribution.

We also notice that inprocessing often performs worse than preprocessing.
This corroborates the findings of Vukmirovi\'c et
al.~\cite{vukmirovic-et-al-2023-sat}. An explanation might be the heavy cost of
running the techniques multiple time, during proof search. In addition, PE, BCE,
PLE, and QLE rely on a global analysis of the clause set and tend to
become less applicable as the clause set grows.

Finally, we see that PLE and QLE help, especially on first-order problems.
Unexpectedly, PLE generally outperforms the more general QLE.

\newcommand\win[1]{\phantom{#1}\llap{\bf{#1}\kern-.2ex}}

\begin{figure}[t]
\centering
\begin{tabular}{@{}l@{\kern.8em}c@{\kern.8em}c@{\kern.8em}c@{\kern.4em}c@{\kern.8em}c@{\kern.4em}c@{\kern.8em}c@{\kern.4em}c@{\kern.8em}c@{\kern.4em}c@{\kern.8em}c@{\kern.4em}c@{\kern.4em}c@{}}
& None & \!HLBE\! & \multicolumn{2}{c}{PE} & \multicolumn{2}{c}{BCE} & \multicolumn{2}{c}{PLE} & \multicolumn{2}{c}{QLE} & \multicolumn{2}{c}{All} & Union \\
& & in & pre & in & pre & in & pre & in & pre & in & pre & in \\
\midrule
\strut S0 & \win{\phantom{0}575} & \phantom{0}574 & \phantom{0}574 & \phantom{0}572 & \win{\phantom{0}575} & \phantom{0}574 & \phantom{0}574 & \win{\phantom{0}575} & \win{\phantom{0}575} & \phantom{0}574 & \phantom{0}572 & \win{\phantom{0}575} & \phantom{0}589 \\
\strut S1 & \phantom{0}349 & \phantom{0}343 & \phantom{0}351 & \phantom{0}350 & \phantom{0}347 & \phantom{0}348 & \phantom{0}352 & \phantom{0}352 & \phantom{0}352 & \phantom{0}353 & \win{\phantom{0}354} & \phantom{0}344 & \phantom{0}366 \\
\strut TH0 & \phantom{0}711 & \phantom{0}713 & \phantom{0}713 & \phantom{0}709 & \phantom{0}713 & \win{\phantom{0}714} & \phantom{0}710 & \phantom{0}713 & \phantom{0}711 & \phantom{0}705 & \phantom{0}711 & \phantom{0}706 & \phantom{0}722 \\
\strut TH1 & \phantom{0}338 & \phantom{0}334 & \phantom{0}339 & \phantom{0}340 & \win{\phantom{0}341} & \phantom{0}339 & \phantom{0}337 & \win{\phantom{0}341} & \phantom{0}337 & \phantom{0}336 & \phantom{0}338 & \phantom{0}336 & \phantom{0}354 \\
\strut CF & \phantom{0}504 & \phantom{0}513 & \phantom{0}512 & \phantom{0}510 & \phantom{0}509 & \phantom{0}507 & \phantom{0}513 & \phantom{0}505 & \phantom{0}508 & \phantom{0}505 & \phantom{0}517 & \win{\phantom{0}519} & \phantom{0}545 \\
\strut TF0 & \phantom{0}138 & \win{\phantom{0}142} & \phantom{0}138 & \phantom{0}137 & \win{\phantom{0}142} & \phantom{0}139 & \phantom{0}141 & \phantom{0}141 & \phantom{0}139 & \phantom{0}141 & \phantom{0}139 & \phantom{0}137 & \phantom{0}148 \\
\strut TF1 & \phantom{0}227 & \phantom{0}227 & \phantom{0}229 & \win{\phantom{0}231} & \phantom{0}228 & \phantom{0}230 & \phantom{0}229 & \phantom{0}230 & \phantom{0}229 & \phantom{0}230 & \phantom{0}228 & \phantom{0}224 & \phantom{0}234 \\
[1\jot]
\strut Total & 2842 & 2846 & 2856 & 2849 & 2855 & 2851 & 2856 & 2857 & 2851 & 2844 & \win{2859} & 2841 & 2958 \\
\end{tabular}
\caption{Number of solved problems per benchmark set and configuration}
\label{fig:solved-problems}
\end{figure}

\section{Conclusion}
\label{sec:conclusion}

We presented four SAT-inspired techniques for transforming higher-order problems
with the aim of making them more amenable to automatic proof search. Three of
the techniques (HLBE, PE, and BCE) had been previously generalized to
first-order logic; we now generalized them further to higher-order logic. The
fourth technique (QLE) is new.

On the theoretical side, we showed that the techniques preserve satisfiability
and unsatisfiability of problems with respect to Henkin semantics. On the
practical side, we implemented the techniques in the higher-order prover
Zipperposition. Regrettably, the techniques did not perform as well on
higher-order problems as they do on first-order problems. This could be due to
the nature of the benchmark sets.

\let\Acksize=\relax

\def\ackname{\Acksize Acknowledgment}
\paragraph{\textbf{\upshape\ackname.}}
{\Acksize
Alexander Bentkamp gave us some advice about general interpretations.
The anonymous reviewers made many useful suggestions.

This research has received funding from
the European Research Council (ERC) under the European Union's Horizon 2020
research and innovation program (grant agreement No.\ 713999, Matryoshka).
The research has also
received funding from the Netherlands Organization for Scientific Research (NWO)
under the Vidi program (project No.\ 016.Vidi.189.037, Lean Forward).

}

\bibliographystyle{alphaurl}
\bibliography{main} 

\end{document}